\numberwithin{equation}{section}
\newtheorem{theorem}{Theorem}[section]
\newtheorem{corollary}[theorem]{Corollary}
\newtheorem{lemma}[theorem]{Lemma}
\newtheorem{proposition}[theorem]{Proposition}
\theoremstyle{definition}
\theoremstyle{remark}
\newcommand{\ab}{\allowbreak}
\title{An approximative approach to construction of the Glauber dynamics in continuum%
\thanks{The financial support of DFG through the SFB 701
(Bielefeld University) and German-Ukrainian Project 436 UKR 113/94
and 436 UKR 113/97, and RFFI grant 08-01-00105a is gratefully
acknowledged.}}
\author{Dmitri Finkelshtein\thanks{Institute of Mathematics,
National Academy of Sciences of Ukraine, Kyiv, Ukraine ({\tt fdl@\ab
imath.\ab kiev.\ab ua}).} \and Yuri Kondratiev\thanks{Fakult\"{a}t
f\"{u}r Mathematik, Universit\"{a}t Bielefeld, 33615 Bielefeld,
Germany ({\tt kondrat@\ab math.\ab uni-bielefeld.\ab de})} \and
Oleksandr Kutoviy\thanks{Fakult\"{a}t f\"{u}r Mathematik,
Universit\"{a}t Bielefeld, 33615 Bielefeld, Germany ({\tt
kutoviy@\ab math.\ab uni-bielefeld.\ab de}).} \and Elena
Zhizhina\thanks{Institute for Information Transmission Problems,
Moscow, Russia ({\tt ejj@\ab iitp.\ab ru}).}}
\begin{document}

\maketitle

\begin{abstract}
We develop a new approach for the construction of the Glauber
dynamics in continuum. Existence of the corresponding strongly
continuous contraction semigroup in a proper Banach space is shown.
Additionally we present the finite- and infinite-volume
approximations of the semigroup by families of bounded linear
operators.
\end{abstract}

{\small {\bf Keywords.} Continuous systems, non-equilibrium Glauber
dynamics, spatial birth-and-death processes, semigroup
approximation, stochastic evolution}

{\small {\bf AMS subject classification.} 60K35; 41A65; 82C21;
82C22}

\section{Introduction}
The Glauber type stochastic dynamics in continuum are
birth-and-death Markov processes on configuration spaces with the
given reversible states which are grand canonical Gibbs measures.
The corresponding Markov generators are related with the (non-local)
Dirichlet forms for the considered Gibbs measures. The latter fact
gives a standard way to construct properly associated stationary
Markov processes. These processes preserve the initial Gibbs state
in the time evolution; they are called the equilibrium Glauber
dynamics, see, e.g., \cite{KoLy}, \cite{KoLyRo}, \cite{KoMiZh},
\cite{FiKoLy}. Note that, in applications, the time evolution of
initial state is the subject of the primary interest. In what
follows, we will try to understand the considered stochastic
dynamics as the evolution of initial distributions for the system.
Actually, the Markov process itself gives a general technical
equipment to study this problem. Let us stress that the transition
from the micro-state evolution corresponding to the given initial
configuration  to the macro-state dynamics is the well developed
concept in the theory of infinite particle systems. This point of
view appeared initially in the framework of the Hamiltonian dynamics
of classical gases, see, e.g., \cite{DoSiSu}.

The study of the non-equilibrium Glauber dynamics needs construction
of the time evolution for a wider class of initial measures. The
lack of the general Markov processes techniques for the considered
systems makes necessary to develop alternative approaches to study
the state evolutions in the Glauber dynamics. The approach realized
in \cite{KoKutMi}, \cite{KoKutZh} is probably the only known at the
present time. The description of the time evolutions for measures on
configuration spaces in terms of an infinite system of evolutional
equations for the corresponding correlation functions was used
there. The latter system is a Glauber evolution's analog of the
famous BBGKY-hierarchy for the Hamiltonian dynamics.

Here we develop another approach to the Glauber dynamics in
continuum. This constructive approach was inspired by working up a
new algorithm for detection problems in image processing. Object
detection, or detecting a configuration of objects from a digital
image, is a crucial step in many applications. In paper \cite{DMZ},
a new stochastic algorithm to solve object detection problems has
been proposed.

The algorithm is based on a continuous time stochastic evolution of
macro-objects in a large (but finite) volume in continuum. It was
considered a model of possibly partially overlapping discs. Each
disc in the final configuration is associated with a given object in
the image. The evolution under consideration is a birth-and-death
equilibrium dynamics on the configuration space of discs with a
given stationary Gibbs measure. In this scheme, the intensity of
birth is a constant, whereas intensities of death depend on the
energy function and the present configuration. This choice of rates
has been made to optimize the convergence speed. Indeed, the volume
of the space for birth is much larger than the number of discs in
the configuration. To apply the continuous time dynamics to
simulation process we have to construct a discretization of this
process in time. The resulting discrete time process is a
non-homogeneous Markov chain with transition probabilities depending
on the energy function and the discretization step. The main point
of our approach is that each step in proposed algorithm concerns the
whole configuration, so that it is so-called multiple
birth-and-death algorithm.

In this paper, we introduce and study the analogous discretization
for infinite volume birth and death dynamics of the Glauber type,
and prove the convergence to the continuous time process as the step
of discretization tends to zero. Furthermore, we use the
discretization to construct a non-equilibrium dynamics.

\section{Description of model}

\subsection{General facts and notations}
Let ${\mathcal B}({{\mathbb R}^d})$ be the family of all Borel sets
in ${{\mathbb R}^d}$, $d\geq 1$. ${\mathcal B}_{\mathrm{b}}
({{\mathbb R}^d})$ denotes the system of all bounded sets in
${\mathcal B}({{\mathbb R}^d})$.

We define the space of the $n$-point configurations in
$Y\in{\mathcal B}({{\mathbb R}^d})$ as
\[
\Gamma ^{(n)}_Y:=\bigl\{  \eta \subset Y \bigm| |\eta |=n\bigr\}
,\quad n\in {\mathbb N},
\]
where $|\cdot|$ mean the cardinality of a finite set. We put also
$\Gamma ^{(0)}_Y:=\{\emptyset\}$. As a set, $\Gamma ^{(n)}_Y$ is
equivalent to the symmetrization of
\[
\widetilde{Y^n} = \bigl\{ (x_1,\ldots ,x_n)\in Y^n \bigm| x_k\neq
x_l \text{ if } k\neq l\bigr\} .
\]
Hence, one can introduce the corresponding Borel $\sigma $-algebra,
which we denote by ${\mathcal B}(\Gamma ^{(n)}_Y)$. The space of
finite configurations in $Y\in{\mathcal B}({{\mathbb R}^d})$ defined
as
\[
\Gamma_{0,Y}:=\bigsqcup_{n\in {\mathbb N}_0}\Gamma ^{(n)}_Y
\]
is equipped with the topology of disjoint unions. Therefore, one can
introduce the corresponding Borel $\sigma $-algebra ${\mathcal B}
(\Gamma_{0,Y})$. In the case of $Y={{\mathbb R}^d}$ we will omit the
index $Y$ in the notation, namely, $\Gamma_0:=\Gamma_{0,{{\mathbb
R}^d}}$, $\Gamma ^{(n)}:=\Gamma ^{(n)}_{{{\mathbb R}^d}}$.

The restriction of the Lebesgue product measure $(dx)^n$ to
$\bigl(\Gamma ^{(n)}, {\mathcal B}(\Gamma ^{(n)})\bigr)$ we denote
by $m^{(n)}$. We set $m^{(0)}:=\delta_{\{\emptyset\}}$. Let
$\kappa>0$ be fixed. The Lebesgue--Poisson measure $\lambda_\kappa$
on $\Gamma_0$ is defined as
\[
\lambda_\kappa :=\sum_{n=0}^\infty \frac {\kappa^n}{n!}m ^{(n)}.
\]
For any $\Lambda \in{\mathcal B}_{\mathrm{b}}({{\mathbb R}^d})$ the
restriction of $\lambda_\kappa$ to $\Gamma_\Lambda
:=\Gamma_{0,\Lambda }$ will be also denoted by $\lambda_\kappa $. We
denote also $\lambda=\lambda_1$.

The configuration space over space ${{\mathbb R}^d}$ consists of all
locally finite subsets (configurations) of ${{\mathbb R}^d}$,
namely,
\begin{equation}\label{confspace}
\Gamma  =\Gamma_{{\mathbb R}^d} :=\bigl\{ \gamma  \subset {{\mathbb
R}^d} \bigm| |\gamma  \cap \Lambda  |<\infty, \text{ for all }
\Lambda \in {\mathcal B}_{\mathrm{b}} ({{\mathbb R}^d})\bigr\}.
\end{equation}
The space $\Gamma $ is equipped with the vague topology, i.e., the
minimal topology for which all mappings $\Gamma \ni\gamma \mapsto
\sum_{x\in\gamma } f(x)\in{\mathbb R}$ are continuous for any
continuous function $f$ on ${{\mathbb R}^d}$ with compact support.
Note that $\sum_{x\in\gamma } f(x)$ is always finite in this case
since the summation is taken over only finitely many points of
$\gamma $ which belong to the support of $f$.
%In \cite{KoKut}, it
%was shown that $\Gamma $ with the vague topology may be metrizable
%and it becomes a Polish space.
$\Gamma $ with the vague topology is a Polish space (see, e.g.,
\cite{KoKut} and references therein). The corresponding Borel
$\sigma $-algebra $ {\mathcal B}(\Gamma  )$ appears as the smallest
$\sigma $-algebra for which all mappings $\Gamma  \ni \gamma \mapsto
|\gamma_\Lambda |\in{\mathbb N}_0:={\mathbb N}\cup\{0\}$ are
measurable for any $\Lambda \in{\mathcal B}_{\mathrm{b}}({{\mathbb
R}^d})$. Here and below $\gamma_\Lambda :=\gamma \cap\Lambda $.

It can be shown that the space $\bigl( \Gamma , {\mathcal B}(\Gamma
)\bigr)$ is the projective limit of the family of spaces $\bigl\{(
\Gamma_\Lambda , {\mathcal B}(\Gamma_\Lambda ))\bigr\}_{\Lambda \in
{\mathcal B}_{\mathrm{b}} ({{\mathbb R}^d})}$. The Poisson measure
$\pi_\kappa$ on $\bigl(\Gamma  ,{\mathcal B}(\Gamma )\bigr)$ is
given as the projective limit of the family of measures
$\{\pi_\kappa ^\Lambda \}_{\Lambda \in {\mathcal B}_{\mathrm{b}}
({{\mathbb R}^d})}$, where $\pi_\kappa^\Lambda :=e^{-\kappa
m(\Lambda )}\lambda_\kappa $ is the probability measure on $\bigl(
\Gamma _\Lambda , {\mathcal B}(\Gamma_\Lambda )\bigr)$. Here
$m(\Lambda )$ is the Lebesgue measure of $\Lambda \in {\mathcal
B}_{\mathrm{b}} ({{\mathbb R}^d})$.

A function $F$ on $\Gamma $ is called {\it cylinder function} if it
may be characterized by the following relation:
\begin{equation}\label{cyl}
F(\gamma  )=F\upharpoonright_{\Gamma_\Lambda  }(\gamma_\Lambda)
\end{equation}
for some $\Lambda \in {\mathcal B}_{\mathrm{b}}({{\mathbb R}^d})$.
The class of all such function will be denoted by ${{\mathcal
F}_{\mathrm{cyl}}}(\Gamma )$.

A set $M\in {\mathcal B} (\Gamma_0)$ is called bounded if there
exists $\Lambda  \in {\mathcal B}_{\mathrm{b}} ({{\mathbb R}^d})$
and $N\in {\mathbb N}$ such that $M\subset \bigsqcup_{n=0}^N\Gamma
_\Lambda  ^{(n)}$. The set of bounded measurable functions with
bounded support we denote by ${B_{\mathrm{bs}}}(\Gamma_0)$, i.e.,
$G\in{B_{\mathrm{bs}}}(\Gamma_0)$ if $G\upharpoonright_{\Gamma
_0\setminus M}=0$ for some bounded $M\in {\mathcal B}(\Gamma_0)$.
Note that any ${\mathcal B}(\Gamma_0)$-measurable function $G$ on
$\Gamma_0$, in fact, is a sequence of functions
$\bigl\{G^{(n)}\bigr\}_{n\in{\mathbb N}_0}$, where $G^{(n)}$ is a
${\mathcal B}(\Gamma ^{(n)})$-measurable function on $\Gamma
^{(n)}$.

The following mapping between ${B_{\mathrm{bs}}} (\Gamma_0)$ and
${{\mathcal F}_{\mathrm{cyl}}}(\Gamma  )$ plays the key role in our
further considerations:
\begin{equation}
KG(\gamma  ):=\sum_{\eta \Subset \gamma  }G(\eta ), \quad \gamma
\in \Gamma , \label{KT3.15}
\end{equation}
where $G\in {B_{\mathrm{bs}}}(\Gamma_0)$, see, e.g.,
\cite{KoKu99,Le75I,Le75II}. The summation in the latter expression
is taken over all finite subconfigurations of $\gamma $, which is
denoted by the symbol $\eta \Subset \gamma  $. The mapping $K$ is
linear, positivity preserving, and invertible, with
\begin{equation}
K^{-1}F(\eta ):=\sum_{\xi \subset \eta }(-1)^{|\eta \setminus \xi
|}F(\xi ),\quad \eta \in \Gamma_0.\label{k-1trans}
\end{equation}
We denote the restriction of $K$ onto functions on $\Gamma_0$ by
$K_0$.

For any fixed $C>0$ we consider the following (pre-)norm on the
space ${B_{\mathrm{bs}}}(\Gamma_0)$
\begin{equation}\label{pre-norm}
\|G\|_C:= \int_{\Gamma_0} |G(\eta)| C^{|\eta|} \lambda (d\eta).
\end{equation}
The completion of ${B_{\mathrm{bs}}}(\Gamma_0)$ w.r.t. this pre-norm
is the following Banach space of ${\mathcal B}(\Gamma
_0)$-measurable functions
\begin{equation}\label{norm}
{\mathcal L}_C:=\bigl\{ G:\Gamma_0\rightarrow{\mathbb R} \bigm|
\|G\|_C <\infty\bigr\}.
\end{equation}

Let $\mu$ be a probability measure on $\bigl(\Gamma ,{\mathcal
B}(\Gamma )\bigr)$ such that $\int_\Gamma  |\gamma_\Lambda |^n \mu(d
\gamma )<\infty $ for any $\Lambda \in{\mathcal
B}_{\mathrm{b}}({{\mathbb R}^d})$, $n\in{\mathbb N}$. The class of
all such measures we denote by ${\mathcal M}^1_{\mathrm{fm}}(\Gamma
)$. A measure $\mu \in \mathcal{M}_{\mathrm{fm} }^1(\Gamma  )$ is
called locally absolutely continuous w.r.t. the Poisson measure
$\pi$ if for any $\Lambda  \in {\mathcal B}_{\mathrm{b}} ({{\mathbb
R}^d})$ the projection of $\mu$ onto $\Gamma_\Lambda $ is absolutely
continuous w.r.t. the projection of $\pi$ onto $\Gamma_\Lambda $. By
\cite{KoKu99}, in this case, there exists a system of measurable
symmetric functions $k_\mu^{(n)}: ({{\mathbb
R}^d})^n\rightarrow[0;+\infty)$, ${n\in{\mathbb N}}$ such that for
any $G\in{B_{\mathrm{bs}}}(\Gamma_0)$ the following identity holds
\begin{equation}\label{corfunc}
\int_\Gamma  (KG^{(n)})(\gamma ) \mu(d \gamma ) = \frac{1}{n!}
\int_{({{\mathbb R}^d})^n} G^{(n)}(x_1,\ldots, x_n)
k_\mu^{(n)}(x_1,\ldots, x_n) dx_1 \ldots d x_n.
\end{equation}
Functions $k_\mu^{(n)}$ are called the {\it correlation functions}
in mathematical physics as well as functions
$\frac{1}{n!}k_\mu^{(n)}$ are called the {\it factorial moments} in
probability theory.

We recall now without a proof the partial case of the well-known
technical lemma which plays a very important role in our
calculations (cf., \cite{KoMiZh}).
\begin{lemma}\label{Minlos}
For any measurable function $H:\Gamma_0\times\Gamma_0\times\Gamma
_0\rightarrow{\mathbb R}$
\begin{equation}\label{minlosid}
\int_{\Gamma_{0}}\sum_{\xi \subset \eta }H\left( \xi ,\eta \setminus
\xi ,\eta \right) \lambda \left( d \eta \right) =\int_{\Gamma
_{0}}\int_{\Gamma_{0}}H\left( \xi ,\eta ,\eta \cup \xi \right)
\lambda \left( d \xi \right) \lambda \left( d \eta \right)
\end{equation}
if both sides of the equality make sense.
\end{lemma}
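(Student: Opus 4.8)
The plan is to reduce the three-argument identity to the classical two-argument Minlos lemma, and then to prove the latter by unfolding the definition of the Lebesgue--Poisson measure and rearranging the resulting series.

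First I would note that on the left-hand side the sets $\xi$ and $\eta\setminus\xi$ partition $\eta$, so that $\eta=\xi\cup(\eta\setminus\xi)$ is determined by the first two arguments. Hence, putting $\tilde H(\xi,\zeta):=H(\xi,\zeta,\xi\cup\zeta)$ — which is product-measurable on $\Gamma_0\times\Gamma_0$ since $(\xi,\zeta)\mapsto\xi\cup\zeta$ is measurable — the asserted equality \eqref{minlosid} becomes
\[
\int_{\Gamma_0}\sum_{\xi\subset\eta}\tilde H(\xi,\eta\setminus\xi)\,\lambda(d\eta)=\int_{\Gamma_0}\int_{\Gamma_0}\tilde H(\xi,\eta)\,\lambda(d\xi)\,\lambda(d\eta),
\]
so it suffices to establish this two-argument version.

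Next I would expand both sides using $\lambda=\sum_{n\ge0}\frac1{n!}m^{(n)}$. For a fixed $n$-point configuration $\{x_1,\dots,x_n\}$, the subconfigurations $\xi\subset\{x_1,\dots,x_n\}$ of cardinality $k$ correspond to the $k$-element subsets of $\{1,\dots,n\}$; after integrating over $(\R^d)^n$ and using permutation-invariance of Lebesgue measure, each of the $\binom{n}{k}$ terms of a given size $k$ contributes the same value $\int_{(\R^d)^n}\tilde H(\{x_1,\dots,x_k\},\{x_{k+1},\dots,x_n\})\,dx_1\cdots dx_n$. The combinatorial factor is $\frac1{n!}\binom{n}{k}=\frac1{k!\,(n-k)!}$, and after the substitution $l=n-k$ the constrained double sum over $0\le k\le n<\infty$ becomes the product of two independent sums over $k\ge0$ and $l\ge0$; applying Fubini on $(\R^d)^k\times(\R^d)^l$ and re-summing recovers $\int_{\Gamma_0}\int_{\Gamma_0}\tilde H(\xi,\eta)\,\lambda(d\xi)\,\lambda(d\eta)$, which is the right-hand side.

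The only genuine point requiring care is the legitimacy of the interchanges of summation and integration and of the rearrangement of the double series; this is exactly what the hypothesis ``if both sides of the equality make sense'' provides — one first runs the whole computation with $|H|$ in place of $H$ (where Tonelli applies unconditionally to the nonnegative integrand and the nonnegative series), so that finiteness of one side forces finiteness of all intermediate expressions and of the other side, and then the signed case follows by dominated convergence. Everything else is bookkeeping with the binomial identity, so I do not expect any further obstacle.
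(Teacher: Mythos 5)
Your proof is correct, and since the paper only recalls Lemma \ref{Minlos} without proof (citing \cite{KoMiZh}), there is nothing for it to diverge from: the reduction to the two-argument identity via $\tilde H(\xi,\zeta):=H(\xi,\zeta,\xi\cup\zeta)$, the expansion $\lambda=\sum_{n\geq 0}\frac{1}{n!}m^{(n)}$ with the permutation-invariance argument producing the factor $\frac{1}{k!\,(n-k)!}$, and the Tonelli/Fubini resummation over $k,l\geq 0$ are exactly the standard computation behind the cited result. The one point worth stating explicitly is that ``both sides make sense'' must be read as absolute integrability, i.e.\ finiteness of \eqref{minlosid} applied to $|H|$ (which is how the lemma is invoked throughout the paper, always with absolute values); for merely conditionally convergent expressions the series rearrangement would not be justified, whereas with that reading your scheme --- Tonelli for $|H|$ first, then the signed case via $H=H^{+}-H^{-}$ or dominated convergence --- is complete.
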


\subsection{Glauber dynamics in continuum}

Let $\phi:{{\mathbb R}^d}\rightarrow{\mathbb R}_+:=[0;+\infty)$ be
even non-negative function which satisfies integrability condition
\begin{equation}\label{integrability}
C_\phi = \int_{{\mathbb R}^d} \bigl(1-e^{-\phi(x)}\bigr) dx
<+\infty.
\end{equation}
For any $\gamma \in\Gamma $, $x\in{{\mathbb R}^d}\setminus\gamma $
we set
\begin{equation}\label{relativeenergy}
E^\phi(x,\gamma ) :=\sum_{y\in\gamma } \phi(x-y) \in [0;\infty].
\end{equation}

Let us define the (pre-) generator of the Glauber dynamics: for any
$F\in{{\mathcal F}_{\mathrm{cyl}}}(\Gamma  )$ we set
\begin{align}
(LF)(\gamma ):=&\sum_{x\in\gamma } \bigl[F(\gamma \setminus x)
-F(\gamma )\bigr] \label{genGa}
\\&+ z \int_{{{\mathbb R}^d}} \bigl[F(\gamma \cup x)
-F(\gamma )\bigr]\exp\bigl\{-E^\phi(x,\gamma )\bigr\} dx, \qquad
\gamma \in\Gamma .\notag
\end{align}
Here $z>0$ is the {\it activity} parameter. Note that, because of
\eqref{cyl}, for $F\in{{\mathcal F}_{\mathrm{cyl}}}(\Gamma )$ there
exists $\Lambda \in{\mathcal B}_{\mathrm{b}}({{\mathbb R}^d})$ such
that $F(\gamma \setminus x)=F(\gamma )$ for any $x\in\gamma
_{\Lambda ^c}$ and $F(\gamma \cup x)=F(\gamma )$ for any
$x\in\Lambda ^c$; note also that $\exp\bigl\{-E^\phi(x,\gamma
)\bigr\}\leq 1$, therefore, the sum and integral in \eqref{genGa}
are finite.

Using the techniques considered in \cite{GarKur06}, it is possible
to show that there exists a~proper subspace
$\mathcal{S}\subset\Gamma $ and an~$\mathcal{S}$-valued stochastic
process with sample paths in the Skorokhod space
$D_\mathcal{S}[0;\infty)$ associated to the generator $L$.

This allows us to define the semigroup associated with $L$ in the
space of bounded continuous functions on $S$. This semigroup
determines the solution to the Kolmogorov equation, which formally
(only in the sense of action of operator) has the following form:
\begin{equation}
\frac{dF_t}{dt}=LF_t,\qquad F_t\bigm|_{t=0}=F_0.\label{Kolmogor}
\end{equation}

However, to show that $L$ is a generator of a semigroup in other
functional spaces on $\Gamma $ seems to be a difficult problem. This
difficulty is hidden in the complex structure of the non-linear
infinite dimensional space $\Gamma $.

In various applications the evolution of the corresponding
correlation functions (or measures) helps already to understand the
behavior of the process and gives candidates for invariant states.
The evolution of correlation functions of the process is related
heuristically to the evolution of states of our infinite particle
systems. The latter evolution is formally given as a solution to the
dual Kolmogorov equation (Fokker--Planck equation):
\begin{equation}
\frac{d\mu_t}{dt}=L^*\mu_t, \qquad
\mu_t\bigm|_{t=0}=\mu_0,\label{FokkerPlanc}
\end{equation}
where $L^*$ is the adjoint operator to $L$ on ${\mathcal
M}^1_\mathrm{fm}(\Gamma )$, provided, of course, that it exists.

Following the general scheme proposed in \cite{KoKutMi}, we
construct the evolution of functions which corresponds to the
\textit{symbol} ($K$-image) $\hat{L}=K^{-1}LK$ of the operator $L$
in $L^1$-space on $\Gamma_0$ w.r.t. the weighted Lebesgue--Poisson
measure, namely, in the space ${\mathcal L}_C$, see \eqref{norm}.

The evolution equation for \textit{quasi-observables} (functions on
$\Gamma_{0}$) corresponding to the Kolmogorov equation
(\ref{Kolmogor}) has the following form
\begin{equation}
\frac{dG_t}{dt}=\widehat{L}G_t,\qquad
G_t\bigm|_{t=0}=G_0.\label{quasiKolmogor}
\end{equation}
Then in a way analogous to that in which the corresponding
Fokker--Planck equation (\ref{FokkerPlanc}) was determined for
(\ref{Kolmogor}) we get the evolution equation for the correlation
functions corresponding to the equation (\ref{quasiKolmogor}):
\begin{equation}
\frac{dk_t}{dt}=\widehat{L}^{*}k_t,\qquad
k_t\bigm|_{t=0}=k_0,\label{corrfunctiona}
\end{equation}
where $\widehat{L}^{*}$ is the mapping dual to $\widehat{L}$ w.r.t.
the pairing
\begin{equation}\label{pairing}
\langle\!\langle G, k\rangle\!\rangle = \int_{\Gamma_0} G(\eta)
k(\eta) \lambda (d\eta).
\end{equation}

The existence of the evolution \eqref{quasiKolmogor} in ${\mathcal
L}_C$ gives now the following bounds for the solution of
\eqref{corrfunctiona} (if it exists):
\begin{equation}\label{RB}
|k_t(\eta)| \leq \mathrm{const} \cdot C^{|\eta|}, \quad
\eta\in\Gamma_0.
\end{equation}
The estimate \eqref{RB} is called the Ruelle bound: in \cite{R69},
\cite{R70}, it was shown that there is a class of Gibbs measures
$\{\mu\}$ whose correlation functions $\{k_\mu>0\}$ satisfy
\eqref{RB} for $\mathrm{const}=1$. The bound \eqref{RB} is also
called sub-Poissonian since $\{C^n\}_{n\geq0}$ is the system of the
correlation functions for the Poisson measure $\pi_C$.

In the present paper we obtain the strong solution to the equation
\eqref{quasiKolmogor} in ${\mathcal L}_C$. This allows us to solve
the equation \eqref{corrfunctiona} in a weak sense w.r.t. the
pairing \eqref{pairing}. In the forthcoming paper
\cite{FiKoKut-Glauber-2} we will consider the strong solution to
\eqref{corrfunctiona} in a proper Banach space. Moreover, we will
show that this solution at any moment of time is a correlation
function of some state.

\section{Construction and properties of the semigroup}

\subsection{Description of approximation}

Let $G\in {B_{\mathrm{bs}}}(\Gamma_0)$ then $F=KG\in{{\mathcal
F}_{\mathrm{cyl}}}(\Gamma )$. By \cite{FiKoOl07,KoKutZh}, we have
the following explicit form for the mapping $\hat{L}:=K^{-1}LK$ on
${B_{\mathrm{bs}}}(\Gamma_0)$
\begin{equation}\label{Lhat}
(\hat{L}G)(\eta) =- |\eta| G(\eta) + z
\sum_{\xi\subset\eta}\int_{{\mathbb R}^d} e^{-E^\phi(x,\xi)}
G(\xi\cup x)e_\lambda(e^{-\phi (x - \cdot)}-1,\eta\setminus\xi) dx ,
\end{equation}
where, by definition, for any $\mathcal{B}(\mathbb{R}^d)$-measurable
function $f$,
\begin{equation}\label{expLP}
e_\lambda (f,\eta ):=\prod_{x\in \eta }f(x) , \quad \eta \in \Gamma
_0\!\setminus\!\{\emptyset\}, \qquad  e_\lambda (f,\emptyset ):=1.
\end{equation}

Let us denote for any $\eta\in\Gamma_0$
\begin{align}
(L_0 G)(\eta) &:= - |\eta| G(\eta);\label{L0}\\
(L_1 G)(\eta) &:= z \sum_{\xi\subset\eta}\int_{{\mathbb R}^d}
e^{-E^\phi(x,\xi)} G(\xi\cup x)e_\lambda(e^{-\phi (x -
\cdot)}-1,\eta\setminus\xi) dx.\label{L1}
\end{align}

\begin{proposition}\label{prop-oper}
The expression \eqref{Lhat} defines a linear operator $\hat{L}$ in
${\mathcal L}_C$ with the dense domain ${\mathcal L}_{2C}\subset
{\mathcal L}_C$.
\end{proposition}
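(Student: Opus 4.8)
The plan is to estimate the two pieces $L_0$ and $L_1$ separately in the $\|\cdot\|_C$-norm, with the key point being that $L_0$ is bounded as an operator from $\L_{2C}$ into $\L_C$ (the extra factor $|\eta|$ being absorbed by the ratio $(C/(2C))^{|\eta|}$), and that $L_1$ is bounded from $\L_{2C}$ into $\L_C$ thanks to the integrability assumption \eqref{integrability} together with the Minlos-type Lemma~\ref{Minlos}. First I would treat $L_0$: for $G\in\L_{2C}$,
\begin{equation*}
\|L_0 G\|_C = \int_{\Gamma_0} |\eta|\,|G(\eta)|\,C^{|\eta|}\lambda(d\eta)
= \int_{\Gamma_0} |\eta| \Bigl(\tfrac{1}{2}\Bigr)^{|\eta|} |G(\eta)|\,(2C)^{|\eta|}\lambda(d\eta),
\end{equation*}
and since $\sup_{n\geq 0} n 2^{-n} =: c_0 <\infty$ we get $\|L_0G\|_C \leq c_0\|G\|_{2C}<\infty$; in particular $L_0$ is well defined on $\L_{2C}$.

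Next I would treat $L_1$. Using \eqref{L1}, the bound $e^{-E^\phi(x,\xi)}\leq 1$, and the triangle inequality,
\begin{equation*}
\|L_1G\|_C \leq z\int_{\Gamma_0}\sum_{\xi\subset\eta}\int_{\R^d}
|G(\xi\cup x)|\, e_\lambda\bigl(|e^{-\phi(x-\cdot)}-1|,\eta\setminus\xi\bigr)\,dx\; C^{|\eta|}\,\lambda(d\eta).
\end{equation*}
Now apply Lemma~\ref{Minlos} with the roles of $\xi$ and $\eta\setminus\xi$: the double integral becomes
\begin{equation*}
z\int_{\Gamma_0}\int_{\Gamma_0}\int_{\R^d} |G(\xi\cup x)|\, e_\lambda\bigl(|e^{-\phi(x-\cdot)}-1|,\eta\bigr)\,dx\; C^{|\xi|+|\eta|}\,\lambda(d\xi)\lambda(d\eta).
\end{equation*}
The $\eta$-integration factorizes by the definition of the Lebesgue--Poisson measure and of $e_\lambda$:
\begin{equation*}
\int_{\Gamma_0} e_\lambda\bigl(|e^{-\phi(x-\cdot)}-1|,\eta\bigr)\,C^{|\eta|}\lambda(d\eta)
= \exp\Bigl\{C\int_{\R^d}\bigl(1-e^{-\phi(x-y)}\bigr)\,dy\Bigr\} = e^{C C_\phi},
\end{equation*}
which is finite and independent of $x$ by \eqref{integrability}. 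What remains is
\begin{equation*}
z\,e^{CC_\phi}\int_{\Gamma_0}\int_{\R^d} |G(\xi\cup x)|\,dx\; C^{|\xi|}\lambda(d\xi)
= z\,e^{CC_\phi}\,\frac{1}{C}\int_{\Gamma_0} |\eta|\,|G(\eta)|\,C^{|\eta|}\lambda(d\eta),
\end{equation*}
where the last identity is obtained by writing $\eta=\xi\cup x$ and using the recurrence property of $\lambda$ (i.e. $\int_{\Gamma_0}\int_{\R^d} H(\xi\cup x,x)\,dx\,\lambda(d\xi) = \int_{\Gamma_0}\sum_{x\in\eta} H(\eta,x)\,\lambda(d\eta)$, a special case of Lemma~\ref{Minlos}). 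Finally the factor $|\eta|C^{|\eta|}$ is dominated by $C^{-1}\cdot|\eta|2^{-|\eta|}(2C)^{|\eta|}$ exactly as for $L_0$, so $\|L_1G\|_C\leq \frac{z c_0}{C^2}e^{CC_\phi}\,\|G\|_{2C}<\infty$. Adding the two estimates shows $\hat L=L_0+L_1$ maps $\L_{2C}$ into $\L_C$, so \eqref{Lhat} indeed defines a linear operator on $\L_C$ with domain (at least) $\L_{2C}$.

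For density, I would note that $\Bbs(\Gamma_0)\subset\L_{2C}$ and that $\Bbs(\Gamma_0)$ is dense in $\L_C$ by construction of $\L_C$ as the completion of $\Bbs(\Gamma_0)$ under $\|\cdot\|_C$ (see \eqref{pre-norm}--\eqref{norm}); hence $\L_{2C}$, lying between them, is dense in $\L_C$. The main obstacle is the bookkeeping in the $L_1$ term: one must correctly invoke the combinatorial identity of Lemma~\ref{Minlos} to decouple the sum over $\xi\subset\eta$ into independent $\xi$- and $(\eta\setminus\xi)$-integrations, and then recognize the resulting $x$-integral as producing the finite constant $e^{CC_\phi}$; once that reduction is made, everything else is the elementary estimate $\sup_n n2^{-n}<\infty$ together with the geometric gain from passing from weight $2C$ to weight $C$.
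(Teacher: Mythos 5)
Your proposal is correct and follows essentially the same route as the paper: bound $L_0$ via $n\,2^{-n}$ being bounded, and bound $L_1$ by dropping $e^{-E^\phi(x,\xi)}\leq 1$, decoupling the sum over $\xi\subset\eta$ with Lemma~\ref{Minlos}, integrating out $\eta$ to get the factor $e^{CC_\phi}$ via \eqref{integrability}, converting the remaining $x$-integral back into a factor $|\xi|/C$ by the same lemma, and absorbing $|\xi|$ into the passage from weight $C$ to weight $2C$, with density following from $\Bbs(\Gamma_0)\subset\L_{2C}\subset\L_C$. The only blemish is a harmless constant slip at the end: $|\eta|\,C^{|\eta|}=|\eta|\,2^{-|\eta|}(2C)^{|\eta|}$ needs no extra factor $C^{-1}$, so the bound is $\frac{zc_0}{C}e^{CC_\phi}\|G\|_{2C}$ rather than $\frac{zc_0}{C^{2}}e^{CC_\phi}\|G\|_{2C}$, which does not affect the conclusion.
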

\begin{proof}
For any $G\in {\mathcal L}_{2C} $
\[
\left\Vert L_{0}G\right\Vert_{C} =\int_{\Gamma_0} |G(\eta)||\eta|
C^{|\eta|} \lambda (d \eta)<\int_{\Gamma_0} |G(\eta)|2^{|\eta|}
C^{|\eta|} \lambda (d\eta)<\infty
\]
and, by Lemma~\ref{Minlos},
\begin{align*}
\left\Vert L_{1}G\right\Vert_{C} &\leq z\int_{\Gamma
_{0}}\sum_{\xi \subset \eta }\int_{{{\mathbb
R}^d}}\,e^{-E^\phi(x,\xi )}\left\vert G(\xi \cup x)\right\vert
e_{\lambda }\left(\left\vert e^{-\phi (x-\cdot )}-1\right\vert ,\eta
\setminus \xi\right)dxC^{\left\vert \eta \right\vert }\lambda \left(
d\eta
\right)  \\
&=z\int_{\Gamma_{0}}\int_{\Gamma_{0}}\int_{{{\mathbb
R}^d}}\,e^{-E^\phi(x,\xi )}\left\vert G(\xi \cup x)\right\vert
e_{\lambda }\left(\left\vert e^{-\phi (x-\cdot )}-1\right\vert ,\eta
\right)dxC^{\left\vert \eta \right\vert }C^{\left\vert \xi
\right\vert }\lambda \left( d\xi \right) \lambda \left(d
\eta \right)  \\
&\leq \frac{z}{C}\exp \left\{ CC_{\phi }\right\} \int_{\Gamma
_{0}}\left\vert G\left( \xi \right) \right\vert \left\vert \xi
\right\vert C^{\left\vert \xi \right\vert }\lambda \left( d \xi
\right) < \frac{z}{C}\exp \left\{ CC_{\phi }\right\} \int_{\Gamma
_{0}}\left\vert G\left( \xi \right) \right\vert 2^{\left\vert \xi
\right\vert} C^{\left\vert \xi \right\vert }\lambda \left( d \xi
\right)  \\&<\infty .
\end{align*}
Embedding ${\mathcal L}_{2C}\subset{\mathcal L}_C$ is dense since
${B_{\mathrm{bs}}}(\Gamma_0)\subset{\mathcal L}_{2C}$.
\end{proof}

Let $\delta\in(0;1)$ be arbitrary and fixed. Consider for any
$\Lambda \in{\mathcal B}_{\mathrm{b}}({{\mathbb R}^d})$ the
following linear mapping on functions $F\in{{\mathcal
F}_{\mathrm{cyl}}}(\Gamma_0):=K_0 {B_{\mathrm{bs}}}(\Gamma_0)$
\begin{align}\label{apprGa0}
\left( P_{\delta }^\Lambda  F\right) \left( \gamma \right)
=&\sum_{\eta \subset \gamma }\delta ^{\left\vert \eta \right\vert
}\left( 1-\delta \right) ^{\left\vert \gamma \setminus \eta
\right\vert }\bigl(\Xi^\Lambda_\delta\left( \gamma
\right)\bigr)^{-1}
\\ &\times \int_{\Gamma_{\Lambda }}\left( z\delta \right)
^{\left\vert \omega \right\vert }\prod\limits_{y\in \omega
}e^{-E^\phi\left( y,\gamma \right) }F\left( \left( \gamma \setminus
\eta \right) \cup \omega \right) \lambda \left( d \omega
\right),\quad \gamma \in\Gamma_0,\notag
\end{align}
where
\begin{equation}\label{normfactor}
\Xi_\delta^\Lambda \left( \gamma \right) = \int_{\Gamma_{\Lambda
}}\left( z\delta \right) ^{\left\vert \omega \right\vert
}\prod\limits_{y\in \omega }e^{-E^\phi\left( y,\gamma \right)
}\lambda \left( d \omega \right).
\end{equation}
Clearly, $P_{\delta }^\Lambda $ is a positive preserving mapping and
\[
\left( P_{\delta }^\Lambda  1\right) \left( \gamma
\right)=\sum_{\eta \subset \gamma }\delta ^{\left\vert \eta
\right\vert }\left( 1-\delta \right) ^{\left\vert \gamma \setminus
\eta \right\vert }=1, \quad \gamma \in\Gamma_0.
\]

Operator \eqref{apprGa0} is constructed as a transition operator of
a Markov chain, which is a time discretization of a continuous time
process with the generator \eqref{genGa} and discretization
parameter $\delta\in(0;1)$. Roughly speaking, according to the
representation \eqref{apprGa0}, the probability of transition
$\gamma \rightarrow (\gamma \setminus\eta)\cup\omega$ (which
describes removing of subconfiguration $\eta\subset\gamma $ and
birth of a new subconfiguration $\omega\in\Gamma_\Lambda $) after
small time $\delta$ is equal to
\[
\bigl(\Xi_\delta^\Lambda (\gamma )\bigr)^{-1} \delta^{|\eta|}
(1-\delta)^{|\gamma \setminus\eta|}(z\delta)^{|\omega|}
\prod_{y\in\omega}e^{-E^\phi(y,\gamma )}.
\]

We may rewrite \eqref{apprGa0} in another manner.
\begin{proposition}
For any $F\in{{\mathcal F}_{\mathrm{cyl}}}(\Gamma_0)$ the following
equality holds
\begin{align}\label{anotherform}
\left( P_{\delta }^\Lambda  F\right) \left( \gamma \right) =&
\sum_{\xi \subset \gamma }\left( 1-\delta \right) ^{\left\vert \xi
\right\vert } \int_{\Gamma_{\Lambda }}\left( z\delta \right)
^{\left\vert \omega \right\vert }\prod\limits_{y\in \omega
}e^{-E^\phi\left( y,\gamma \right) }\\&\times (K_0^{-1}
F)\left( \xi \cup \omega \right) \lambda \left( d \omega
\right).\notag
\end{align}
\end{proposition}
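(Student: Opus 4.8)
The plan is to expand $F$ through the $K$-transform and then reorganize the resulting sums so that the normalizing factor $\Xi_\delta^\La(\ga)$ of \eqref{apprGa0} cancels. Put $G:=K_0^{-1}F\in\Bbs(\Ga_0)$, so that $F(\zeta)=\sum_{\tau\subset\zeta}G(\tau)$ for every $\zeta\in\Ga_0$. First I would relabel the summation variable in \eqref{apprGa0} by $\zeta:=\ga\setminus\eta$, which turns $\delta^{|\eta|}(1-\delta)^{|\ga\setminus\eta|}$ into $\delta^{|\ga|-|\zeta|}(1-\delta)^{|\zeta|}$ and rewrites $(P_\delta^\La F)(\ga)\,\Xi_\delta^\La(\ga)$ as $\sum_{\zeta\subset\ga}\delta^{|\ga|-|\zeta|}(1-\delta)^{|\zeta|}\int_{\Ga_\La}e_\la(z\delta\,e^{-E^\phi(\cdot,\ga)},\omega)\,F(\zeta\cup\omega)\,\la(d\omega)$. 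In this integral the set $\{\omega:\omega\cap\ga\neq\emptyset\}$ is $\la$-null, so we may assume $\omega\cap\ga=\emptyset$ (consistently with $E^\phi(\cdot,\ga)$ being defined only off $\ga$); then every $\tau\subset\zeta\cup\omega$ splits uniquely as $\tau=\tau_1\cup\tau_2$ with $\tau_1\subset\zeta$ and $\tau_2\subset\omega$, so that $F(\zeta\cup\omega)=\sum_{\tau_1\subset\zeta}\sum_{\tau_2\subset\omega}G(\tau_1\cup\tau_2)$.

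The weight $\omega\mapsto e_\la(z\delta\,e^{-E^\phi(\cdot,\ga)},\omega)$ is multiplicative, so next I would pull the finite sum over $\tau_1\subset\zeta$ out of the $\omega$-integral and apply Lemma~\ref{Minlos} (in its two-argument special case, with the membership constraint $\omega\in\Ga_\La$ carried along by $\1_{\Ga_\La}$, which factorizes under $\omega=\tau_2\cup(\omega\setminus\tau_2)$) to $\int_{\Ga_\La}e_\la(z\delta\,e^{-E^\phi(\cdot,\ga)},\omega)\sum_{\tau_2\subset\omega}G(\tau_1\cup\tau_2)\,\la(d\omega)$. This turns that integral into the product of two independent $\la$-integrals over $\Ga_\La$, namely $\bigl(\int_{\Ga_\La}e_\la(z\delta\,e^{-E^\phi(\cdot,\ga)},\tau_2)\,G(\tau_1\cup\tau_2)\,\la(d\tau_2)\bigr)\cdot\Xi_\delta^\La(\ga)$, the second factor being exactly \eqref{normfactor}. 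Since $\Xi_\delta^\La(\ga)>0$ (its integrand is positive and the $\omega=\emptyset$ term contributes $1$), it cancels the normalization, leaving $(P_\delta^\La F)(\ga)=\sum_{\zeta\subset\ga}\delta^{|\ga|-|\zeta|}(1-\delta)^{|\zeta|}\sum_{\tau_1\subset\zeta}\int_{\Ga_\La}e_\la(z\delta\,e^{-E^\phi(\cdot,\ga)},\tau_2)\,G(\tau_1\cup\tau_2)\,\la(d\tau_2)$.

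It remains to collapse the double sum: interchanging the order of summation and writing $\zeta=\tau_1\cup\rho$ with $\rho\subset\ga\setminus\tau_1$, the $\zeta$-weight becomes $\delta^{|(\ga\setminus\tau_1)\setminus\rho|}(1-\delta)^{|\tau_1|}(1-\delta)^{|\rho|}$, and the elementary identity $\sum_{\rho\subset S}\delta^{|S\setminus\rho|}(1-\delta)^{|\rho|}=(\delta+(1-\delta))^{|S|}=1$ reduces the inner sum to $(1-\delta)^{|\tau_1|}$. Renaming $\tau_1$ to $\xi$ and $\tau_2$ to $\omega$, and re-expanding $e_\la(z\delta\,e^{-E^\phi(\cdot,\ga)},\omega)=(z\delta)^{|\omega|}\prod_{y\in\omega}e^{-E^\phi(y,\ga)}$ together with $G=K_0^{-1}F$, then yields precisely \eqref{anotherform}. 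I expect the only delicate point to be the decomposition of subconfigurations across $\zeta\cup\omega$ and the applicability of Lemma~\ref{Minlos}: since $G=K_0^{-1}F\in\Bbs(\Ga_0)$ has bounded support, all sums appearing are finite and all integrals effectively run over bounded subsets of $\Ga_\La$ with integrands dominated via $e^{-E^\phi}\le1$, so every expression is absolutely convergent, the hypotheses of Lemma~\ref{Minlos} hold, and all interchanges, including the division by $\Xi_\delta^\La(\ga)$, are legitimate.
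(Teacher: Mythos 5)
Your proof is correct and follows essentially the same route as the paper's: reparametrize the sum via $\zeta=\gamma\setminus\eta$, expand $F=K_0G$ over the (almost surely disjoint) pieces of $\zeta\cup\omega$, use Lemma~\ref{Minlos} to split off the factor $\Xi_\delta^\Lambda(\gamma)$ and cancel the normalization, and collapse the remaining double sum with the binomial identity. The only difference is the order of the last two steps (the paper does the binomial collapse before applying Lemma~\ref{Minlos}, you do it after), and your explicit treatment of the $\lambda$-null set $\{\omega:\omega\cap\gamma\neq\emptyset\}$ and of $\Xi_\delta^\Lambda(\gamma)\geq 1$ is a welcome extra bit of care.
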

\begin{proof} Let $G:=K_0^{-1}F\in {B_{\mathrm{bs}}}(\Gamma_0)$.
Since $\Xi_\delta^\Lambda $ doesn't depend on $\eta$, for $\gamma
\in\Gamma_0$ we have
\begin{align}\label{ex1}
\left( P_{\delta }^\Lambda  F\right) \left( \gamma \right)
=&\bigl(\Xi^\Lambda_\delta\left( \gamma \right)\bigr)^{-1}
\int_{\Gamma_{\Lambda }}\left( z\delta \right) ^{\left\vert \omega
\right\vert
}\prod\limits_{y\in \omega }e^{-E^\phi\left( y,\gamma \right) }\\
&\times \sum_{\eta \subset \gamma }\delta ^{\left\vert \gamma
\setminus \eta \right\vert } \left( 1-\delta \right) ^{\left\vert
\eta \right\vert } F\left( \eta \cup \omega \right) \lambda \left( d
\omega \right).\notag
\end{align}
To rewrite \eqref{apprGa0}, we have used also that any
$\eta\subset\gamma$ corresponds to a unique
$\gamma\setminus\eta\subset\gamma$. Applying the definition of $K_0$
to $F=K_0G$ we obtain
\begin{align}\label{ex2}
\sum_{\eta \subset \gamma }\delta ^{\left\vert \gamma \setminus
\eta \right\vert } \left( 1-\delta \right) ^{\left\vert \eta
\right\vert } F\left( \eta \cup \omega \right)&=\sum_{\eta \subset
\gamma }\delta ^{\left\vert \gamma \setminus \eta \right\vert }
\left(
1-\delta \right) ^{\left\vert\eta \right\vert } \sum_{\zeta\subset\eta}\sum_{\beta\subset\omega}G\left( \zeta \cup \beta \right)\\
&=\sum_{\zeta\subset\gamma}\sum_{\beta\subset\omega}G\left( \zeta \cup \beta \right)\sum_{\eta '\subset
\gamma\setminus\zeta }\delta ^{\left\vert \gamma \setminus( \eta'\cup\zeta) \right\vert }
\left(
1-\delta \right) ^{\left\vert\eta' \cup\zeta\right\vert },\notag
\end{align}
where after changing summation over $\eta\subset\gamma$ and
$\zeta\subset\eta$ we have used the fact that for any configuration
$\eta\subset\gamma$ which contains fixed $\zeta\subset\gamma$ there
exists a unique $\eta'\subset\gamma\setminus\zeta$ such that
$\eta=\eta'\cup\zeta$. But by the binomial formula
\begin{align}\label{ex3}
\sum_{\eta '\subset \gamma\setminus\zeta }\delta ^{\left\vert \gamma
\setminus( \eta'\cup\zeta) \right\vert } \left( 1-\delta \right)
^{\left\vert\eta' \cup\zeta\right\vert
}&=(1-\delta)^{|\zeta|}\sum_{\eta '\subset \gamma\setminus\zeta
}\delta ^{\left\vert (\gamma\setminus \zeta) \setminus \eta'
\right\vert } \left( 1-\delta \right) ^{\left\vert\eta'\right\vert }
\\&=(1-\delta)^{|\zeta|} (\delta+1-\delta)^{|\gamma
\setminus\zeta|}=(1-\delta)^{|\zeta|}.\notag
\end{align}
Combining \eqref{ex1}, \eqref{ex2}, \eqref{ex3}, we get
\begin{align*}
\left( P_{\delta }^\Lambda  F\right) \left( \gamma \right)\notag
=&\bigl(\Xi^\Lambda_\delta\left( \gamma \right)\bigr)^{-1}
\int_{\Gamma_{\Lambda }}\left( z\delta \right) ^{\left\vert \omega
\right\vert
}\prod\limits_{y\in \omega }e^{-E^\phi\left( y,\gamma \right) }\\
&\times
\sum_{\zeta\subset\gamma}\sum_{\beta\subset\omega}G\left( \zeta \cup
\beta \right) (1-\delta)^{|\zeta|} \lambda \left( d \omega \right).
\end{align*}
Next, Lemma~\ref{Minlos} yields
\begin{align*}
\left( P_{\delta }^\Lambda  F\right) \left( \gamma
\right)=&\bigl(\Xi^\Lambda_\delta\left( \gamma \right)\bigr)^{-1}
\int_{\Gamma_{\Lambda }}\int_{\Gamma_{\Lambda }}\left( z\delta
\right) ^{\left\vert \omega \cup\beta\right\vert
}\prod\limits_{y\in \omega\cup\beta }e^{-E^\phi\left( y,\gamma \right) }\notag\\
&\times \sum_{\zeta\subset\gamma}G\left( \zeta \cup \beta
\right) (1-\delta)^{|\zeta|}\lambda \left( d \omega \right) \lambda
\left( d \beta \right)\notag\\=&\int_{\Gamma_{\Lambda }}\left(
z\delta \right) ^{\left\vert \beta\right\vert }\prod\limits_{y\in
\beta }e^{-E^\phi\left( y,\gamma \right) }
\sum_{\zeta\subset\gamma}G\left( \zeta \cup \beta \right)
(1-\delta)^{|\zeta|} \lambda \left( d \beta \right),\notag
\end{align*}
which proves the statement.
\end{proof}

In the next proposition we describe the image of $P_\delta^\Lambda $
under the $K_0$-transform.
\begin{proposition}
Let $\hat{P}_\delta^\Lambda =K_0^{-1}P_\delta^\Lambda  K_0$. Then
for any $G\in {B_{\mathrm{bs}}}(\Gamma_0)$ the following equality
holds
\begin{align}\label{apprsemigroupLa}
\bigl(\hat{P}_{\delta }^\Lambda  G\bigr) \left( \eta \right)
=&\sum_{\xi \subset \eta }\left( 1-\delta \right) ^{\left\vert \xi
\right\vert }\int_{\Gamma_{\Lambda }}\left( z\delta \right)
^{\left\vert \omega \right\vert }G\left( \xi \cup \omega \right)
\\&\times \prod\limits_{y\in \xi }e^{-E^\phi\left( y,\omega \right)
}\prod\limits_{y'\in \eta \setminus \xi }\left( e^{-E^\phi\left(
y',\omega \right) }-1\right) \lambda \left( d \omega \right) , \quad
\eta\in\Gamma_0.\notag
\end{align}
\end{proposition}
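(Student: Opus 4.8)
The plan is to build on the representation \eqref{anotherform} already at hand. Substituting $F=K_0G$ there (so that $K_0^{-1}F=G$) gives a formula for $P_\delta^\Lambda K_0G$, and then $\hat P_\delta^\Lambda G$ is obtained by applying $K_0^{-1}$ on the outside via the explicit inversion \eqref{k-1trans}:
\begin{align*}
\bigl(\hat P_\delta^\Lambda G\bigr)(\eta) &= \sum_{\zeta\subset\eta}(-1)^{|\eta\setminus\zeta|}\bigl(P_\delta^\Lambda K_0G\bigr)(\zeta)\\
&= \sum_{\zeta\subset\eta}(-1)^{|\eta\setminus\zeta|}\sum_{\xi\subset\zeta}(1-\delta)^{|\xi|}\int_{\Gamma_\Lambda}(z\delta)^{|\omega|}\Bigl(\prod_{y\in\omega}e^{-E^\phi(y,\zeta)}\Bigr)G(\xi\cup\omega)\,\lambda(d\omega).
\end{align*}
Since $G\in\Bbs(\Gamma_0)$ has bounded support and $\eta$ is a fixed finite configuration, all the sums over subconfigurations are finite and the $\omega$-integral is effectively over a bounded set with a bounded number of points, so there is no issue in rearranging sums and integral.

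The core combinatorial step is to interchange the two outer sums: instead of summing over $\zeta\subset\eta$ and then $\xi\subset\zeta$, I would sum over $\xi\subset\eta$ first and then over $\zeta$ with $\xi\subset\zeta\subset\eta$, writing $\zeta=\xi\cup\zeta'$ with $\zeta'\subset\eta\setminus\xi$ uniquely determined; correspondingly $(-1)^{|\eta\setminus\zeta|}=(-1)^{|(\eta\setminus\xi)\setminus\zeta'|}$. Using the additivity $E^\phi(y,\xi\cup\zeta')=E^\phi(y,\xi)+E^\phi(y,\zeta')$ for disjoint $\xi,\zeta'$, the factor $\prod_{y\in\omega}e^{-E^\phi(y,\zeta)}$ splits into $\bigl(\prod_{y\in\omega}e^{-E^\phi(y,\xi)}\bigr)\bigl(\prod_{y\in\omega}e^{-E^\phi(y,\zeta')}\bigr)$; the first factor, together with $(1-\delta)^{|\xi|}$, $(z\delta)^{|\omega|}$ and $G(\xi\cup\omega)$, does not depend on $\zeta'$ and can be pulled out, leaving inside only $\sum_{\zeta'\subset\eta\setminus\xi}(-1)^{|(\eta\setminus\xi)\setminus\zeta'|}\prod_{y\in\omega}e^{-E^\phi(y,\zeta')}$.

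To evaluate this inner sum I would first use that $\phi$ is even to rewrite $\prod_{y\in\omega}e^{-E^\phi(y,\zeta')}=\prod_{y\in\omega}\prod_{u\in\zeta'}e^{-\phi(y-u)}=\prod_{u\in\zeta'}e^{-E^\phi(u,\omega)}=e_\lambda(e^{-E^\phi(\cdot,\omega)},\zeta')$, and then recognize the sum as the elementary expansion $\sum_{\zeta'\subset\sigma}(-1)^{|\sigma\setminus\zeta'|}e_\lambda(g,\zeta')=\prod_{u\in\sigma}(g(u)-1)=e_\lambda(g-1,\sigma)$ with $\sigma=\eta\setminus\xi$ and $g=e^{-E^\phi(\cdot,\omega)}$. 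This produces exactly $\prod_{y'\in\eta\setminus\xi}\bigl(e^{-E^\phi(y',\omega)}-1\bigr)$. Finally one more application of the evenness of $\phi$ turns the leftover $\prod_{y\in\omega}e^{-E^\phi(y,\xi)}$ into $\prod_{y\in\xi}e^{-E^\phi(y,\omega)}$, and collecting all factors yields precisely \eqref{apprsemigroupLa}.

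I do not expect a genuine analytic obstacle; the delicate point is purely bookkeeping — tracking the sign $(-1)^{|\eta\setminus\zeta|}$ correctly through the reindexing $\zeta=\xi\cup\zeta'$ so that the resulting sum is identified with a product over $\eta\setminus\xi$ (and not over all of $\eta$), and the repeated, systematic use of the evenness of $\phi$ to move the interaction energy between the configurations $\omega$, $\xi$ and $\zeta'$.
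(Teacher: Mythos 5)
Your proposal is correct and follows essentially the same route as the paper: start from \eqref{anotherform} with $F=K_0G$, apply $K_0^{-1}$ via \eqref{k-1trans}, interchange the sums over $\zeta$ and $\xi$, split the energy product using the symmetry of $\phi$, and evaluate the resulting alternating sum over subsets of $\eta\setminus\xi$ as $\prod_{y'\in\eta\setminus\xi}\bigl(e^{-E^\phi(y',\omega)}-1\bigr)$ (the paper phrases this last step as the $K_0^{-1}$-image of $e_\lambda\bigl(e^{-E^\phi(\cdot\,,\omega)},\cdot\bigr)$, which is the same identity you derive elementarily).
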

\begin{proof}
By \eqref{anotherform} and the definition of $K_0^{-1}$, we have
\begin{align*}
\bigl(\hat{P}_{\delta }^\Lambda  G\bigr) \left( \eta \right)
&=\sum_{\zeta\subset\eta}(-1)^{|\eta\setminus\zeta|}\sum_{\xi
\subset\zeta }\left( 1-\delta \right) ^{\left\vert \xi \right\vert }
\int_{\Gamma_{\Lambda }}\left( z\delta \right) ^{\left\vert \omega
\right\vert }\prod\limits_{y\in \omega }e^{-E^\phi\left( y,\zeta
\right) }G\left( \xi \cup \omega \right) \lambda \left( d \omega
\right)\\&=\sum_{\xi\subset\eta} \left( 1-\delta \right)
^{\left\vert \xi \right\vert }\sum_{\zeta\subset\eta\setminus\xi }
(-1)^{|(\eta\setminus\xi)\setminus\zeta|}\int_{\Gamma_{\Lambda
}}\left( z\delta \right) ^{\left\vert \omega \right\vert
}\prod\limits_{y\in \omega }e^{-E^\phi\left( y,\zeta\cup\xi \right)
}G\left( \xi \cup \omega \right) \lambda \left( d \omega \right).
\end{align*}
Using the definition \eqref{relativeenergy} of the relative energy
we obtain
\[
\prod\limits_{y\in \omega }e^{-E^\phi\left( y,\zeta\cup\xi
\right)}=\prod\limits_{y\in \xi }e^{-E^\phi\left( y,\omega\right)}
\prod\limits_{y'\in \zeta }e^{-E^\phi\left( y',\omega \right)}.
\]
The well-known equality
\begin{align*}
\sum_{\zeta\subset\eta\setminus\xi } (-1)^{|(\eta
\setminus\xi)\setminus\zeta|}\prod\limits_{y'\in \zeta
}e^{-E^\phi\left( y',\omega
\right)}&=\Bigl(K_0^{-1}\prod\limits_{y'\in\cdot }e^{-E^\phi\left(
y',\omega \right)}\Bigl)(\eta\setminus\xi)\\&=\prod\limits_{y'\in
\eta \setminus \xi }\left( e^{-E^\phi\left( y',\omega \right)
}-1\right)
\end{align*}
(see, e.g., \cite{FiKoOl07}) completes the proof.
\end{proof}

\subsection{Construction of the semigroup on ${\mathcal L}_C$}

By analogy with \eqref{apprsemigroupLa}, we consider the following
linear mapping on measurable functions on $\Gamma_0$
\begin{align}\label{apprsemigroup}
\bigl(\hat{P}_{\delta } G\bigr) \left( \eta \right) := & \sum_{\xi
\subset \eta }\left( 1-\delta \right) ^{\left\vert \xi \right\vert
}\int_{\Gamma_{0}}\left( z\delta \right) ^{\left\vert \omega
\right\vert }G\left( \xi \cup \omega \right)
\\&\times \prod\limits_{y\in \xi }e^{-E^\phi\left( y,\omega \right)
}\prod\limits_{y'\in \eta \setminus \xi }\left( e^{-E^\phi\left(
y',\omega \right) }-1\right) \lambda \left( d \omega \right) , \quad
\eta\in\Gamma_0.\notag
\end{align}

\begin{proposition}\label{prop_contr}
Let
\begin{equation}
ze^{CC_{\phi }} \leq C. \label{smallparam}
\end{equation}
Then $\hat{P}_\delta$, given by \eqref{apprsemigroup}, is a well
defined linear operator in ${\mathcal L}_C$, such that
\begin{equation}
\bigl\| \hat{P}_\delta \bigr\|
\leq 1. \label{contacrion}
\end{equation}
\end{proposition}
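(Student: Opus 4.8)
The plan is to prove the bound \eqref{contacrion} by a direct estimate of $\|\hat{P}_\delta G\|_C$ for $G\in\Bbs(\Gamma_0)$, which is dense in $\L_C$, and then to extend to all of $\L_C$ by continuity; well-definedness will come out of the same estimate. The computation rests on two applications of the Minlos-type identity of Lemma~\ref{Minlos}, the Lebesgue--Poisson exponent formula $\int_{\Gamma_0}e_\lambda(f,\eta)\,\lambda(d\eta)=\exp\bigl\{\int_{\X}f(x)\,dx\bigr\}$, and one elementary inequality for products.

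First I would apply the triangle inequality in \eqref{apprsemigroup}, using $\prod_{y\in\xi}e^{-E^\phi(y,\omega)}\le1$ and $\bigl|e^{-E^\phi(y',\omega)}-1\bigr|=1-e^{-E^\phi(y',\omega)}$, to get
\[
\bigl|(\hat{P}_\delta G)(\eta)\bigr|\le\sum_{\xi\subset\eta}(1-\delta)^{|\xi|}\int_{\Gamma_0}(z\delta)^{|\omega|}\bigl|G(\xi\cup\omega)\bigr|\prod_{y\in\xi}e^{-E^\phi(y,\omega)}\prod_{y'\in\eta\setminus\xi}\bigl(1-e^{-E^\phi(y',\omega)}\bigr)\lambda(d\omega).
\]
Multiplying by $C^{|\eta|}$ and integrating in $\eta$, I would invoke Lemma~\ref{Minlos} with the second slot carrying the factor $\prod_{y'\in\,\cdot}(1-e^{-E^\phi(y',\omega)})$ and the third slot carrying $C^{|\cdot|}$; this turns the $\lambda(d\eta)$-integral of a sum over $\xi\subset\eta$ into a double $\lambda(d\xi)\lambda(d\eta)$-integral in which the weight becomes $C^{|\xi|+|\eta|}$ (the diagonal being $\lambda\otimes\lambda$-null) and the product now runs over all of $\eta$. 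The $\eta$-integral then factorizes through the exponent formula, producing $\exp\bigl\{C\int_{\X}(1-e^{-E^\phi(y',\omega)})\,dy'\bigr\}$.

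The key elementary step is the bound $1-e^{-E^\phi(y',\omega)}=1-\prod_{x\in\omega}e^{-\phi(y'-x)}\le\sum_{x\in\omega}\bigl(1-e^{-\phi(y'-x)}\bigr)$, valid since $1-\prod a_i\le\sum(1-a_i)$ whenever $a_i\in[0,1]$; integrating in $y'$ and using \eqref{integrability} gives $\int_{\X}(1-e^{-E^\phi(y',\omega)})\,dy'\le|\omega|C_\phi$, so the $\eta$-integral is at most $(e^{CC_\phi})^{|\omega|}$. Dropping the remaining factor $\prod_{y\in\xi}e^{-E^\phi(y,\omega)}\le1$ and applying Lemma~\ref{Minlos} once more in the reverse direction (now with $\eta=\xi\cup\omega$), I am left with $\int_{\Gamma_0}|G(\eta)|\sum_{\xi\subset\eta}(1-\delta)^{|\xi|}C^{|\xi|}(z\delta e^{CC_\phi})^{|\eta\setminus\xi|}\lambda(d\eta)$. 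For fixed $\eta$ the inner sum equals $\bigl((1-\delta)C+z\delta e^{CC_\phi}\bigr)^{|\eta|}$ by the binomial formula, and the hypothesis \eqref{smallparam} gives $z\delta e^{CC_\phi}\le\delta C$, so this is $\le C^{|\eta|}$. Hence $\|\hat{P}_\delta G\|_C\le\|G\|_C$.

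Since every inequality above was carried out with $|G|$ in place of $G$, the same chain shows that for any $G\in\L_C$ the integral in \eqref{apprsemigroup} converges absolutely for $\lambda$-a.e.\ $\eta$ and produces a function in $\L_C$ (and is insensitive to modification of $G$ on a $\lambda$-null set, again by Lemma~\ref{Minlos}); linearity being clear, $\hat{P}_\delta$ is a well-defined bounded operator on $\L_C$ with $\|\hat{P}_\delta\|\le1$. I expect the only delicate point to be the bookkeeping in the two applications of Lemma~\ref{Minlos}---in particular keeping track of which variable the relative-energy products and the weight $C^{|\cdot|}$ attach to when $\xi$ is merged into $\eta$ and later split off again---together with the product inequality above; the rest is routine.
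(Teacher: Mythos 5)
Your argument is correct and is essentially the paper's own proof: the same chain of estimates using Lemma~\ref{Minlos} twice, the factorization of the $\eta$-integral into $\exp\{C\int_{\X}(1-e^{-E^\phi(y',\omega)})\,dy'\}$, the inequality \eqref{keypos}, the binomial identity $\sum_{\xi\subset\eta}((1-\delta)C)^{|\xi|}(z\delta e^{CC_\phi})^{|\eta\setminus\xi|}=((1-\delta)C+z\delta e^{CC_\phi})^{|\eta|}$, and condition \eqref{smallparam} to conclude contractivity, with well-definedness read off from the absolute convergence of the same estimate. No gaps.
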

\begin{proof}
Since $\phi \geq 0$ we have
\begin{align*}
\left\Vert \hat{P}_{\delta }G\right\Vert_{C}  \leq &\int_{\Gamma
_{0}}\sum_{\xi \subset \eta }\left( 1-\delta \right) ^{\left\vert
\xi \right\vert }\int_{\Gamma_{0}}\left( z\delta \right)
^{\left\vert \omega \right\vert }\left\vert G\left( \xi \cup \omega
\right) \right\vert \\ & \times \prod\limits_{y\in \xi
}e^{-E^\phi\left( y,\omega \right) }\prod\limits_{y'\in \eta
\setminus \xi }\left\vert e^{-E^\phi\left( y',\omega \right)
}-1\right\vert \lambda \left( d \omega \right) C ^{\left\vert
\eta \right\vert } \lambda \left( d \eta \right) \\
= &\int_{\Gamma_{0}}\int_{\Gamma_{0}}\left( 1-\delta \right)
^{\left\vert \xi \right\vert }\int_{\Gamma_{0}}\left( z\delta
\right) ^{\left\vert \omega \right\vert }\left\vert G\left( \xi \cup
\omega \right) \right\vert \\ &\times \prod\limits_{y\in \xi
}e^{-E^\phi\left( y,\omega \right) }\prod\limits_{y'\in \eta
}\left\vert e^{-E^\phi\left( y',\omega \right) }-1\right\vert
\lambda \left( d \omega \right) C ^{\left\vert \eta \right\vert }C
^{\left\vert \xi \right\vert } \lambda \left( d \xi
\right) \lambda \left( d \eta \right) \\
=&\int_{\Gamma_{0}}\int_{\Gamma_{0}}\left( 1-\delta \right)
^{\left\vert \xi \right\vert }\left( z\delta \right) ^{\left\vert
\omega \right\vert }\left\vert G\left( \xi \cup \omega \right)
\right\vert \\ &\times \prod\limits_{y\in \xi }e^{-E^\phi\left(
y,\omega \right) }\exp \left\{ C\int_{{{\mathbb R}^d}}\left(
1-e^{-E^\phi\left( y',\omega \right) }\right) dy'\right\} \lambda
\left( d \omega \right) C ^{\left\vert \xi \right\vert } \lambda
\left( d \xi \right).
\end{align*}

It is easy to see by the induction principle that for $\phi \geq 0$,
$\omega\in\Gamma_0$, $y\notin\omega$
\begin{equation}\label{keypos}
1-e^{-E^\phi\left( y,\omega \right) }=1-\prod\limits_{x\in \omega
}e^{-\phi \left( x-y\right) }\leq \sum_{x\in \omega }\left(
1-e^{-\phi \left( x-y\right) }\right).
\end{equation}
Then
\begin{align*}
\bigl\Vert \hat{P}_{\delta }G\bigr\Vert_{C} \leq &\int_{\Gamma
_{0}}\int_{\Gamma_{0}}\left( 1-\delta \right) ^{\left\vert \xi
\right\vert }\left( z\delta \right) ^{\left\vert \omega \right\vert
}\left\vert G\left( \xi \cup \omega \right) \right\vert \\
&\times \exp \left\{ C\sum_{x\in \omega }\int_{{{\mathbb
R}^d}}\left( 1-e^{-\phi \left( x-y\right) }\right) dy\right\}
\lambda \left( d \omega \right) C ^{\left\vert \xi
\right\vert } \lambda \left( d \xi \right) \\
=&\int_{\Gamma_{0}}\int_{\Gamma_{0}}\left( 1-\delta \right)
^{\left\vert \xi \right\vert }\left( z\delta \right) ^{\left\vert
\omega \right\vert }\left\vert G\left( \xi \cup \omega \right)
\right\vert e^{CC_{\phi }\left\vert \omega \right\vert }C
^{\left\vert \xi \right\vert
} \lambda \left( d \omega \right) \lambda \left( d \xi \right) \\
=&\int_{\Gamma_{0}}\left[ \left( 1-\delta \right) C+z\delta
e^{CC_{\phi }} \right] ^{\left\vert \omega \right\vert }\left\vert
G\left( \omega \right) \right\vert \lambda \left( d \omega \right)
\leq \left\Vert G\right\Vert_{C}.
\end{align*}
For the last inequality we have used that \eqref{smallparam} implies
$\left( 1-\delta \right) C+z\delta e^{CC_{\phi }}\leq C$. Note that,
for $\lambda $-a.a. $\eta\in\Gamma_0$
\begin{equation}\label{correctness}
\bigl(\hat{P}_\delta G \bigr) (\eta) <\infty,
\end{equation}
and the statement is proved.
\end{proof}

\begin{proposition}\label{prop_ineq}
Let the inequality \eqref{smallparam} be fulfilled and define
\[
\hat{L}_\delta:=\frac{1}{\delta}(\hat{P}_\delta-1\!\!1), \quad
\delta\in(0;1),
\]
where $1\!\!1$ is the identity operator in ${\mathcal L}_C$. Then
for any $G\in {\mathcal L}_{2C}$
\begin{equation}\label{ineq}
\bigl\| (\hat{L}_\delta -\hat{L})G\bigr\|_C\leq 3\delta \|G\|_{2C}.
\end{equation}
\end{proposition}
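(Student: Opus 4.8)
The plan is to show that $\hat L_\delta = \frac{1}{\delta}(\hat P_\delta - 1\!\!1)$ converges to $\hat L$ on the dense domain $\mathcal L_{2C}$, with an explicit $O(\delta)$ rate, by a direct term-by-term comparison of the two kernels. First I would write out $(\hat P_\delta G)(\eta)$ from \eqref{apprsemigroup} and separate the sum over $\xi\subset\eta$ into the three pieces where $\xi=\eta$, where $|\eta\setminus\xi|=1$, and where $|\eta\setminus\xi|\ge2$; correspondingly the $\omega$-integral splits via $(z\delta)^{|\omega|}$ into $\omega=\emptyset$, $|\omega|=1$, and $|\omega|\ge2$. Only the "small" combinations survive to first order in $\delta$: the term $\xi=\eta,\ \omega=\emptyset$ gives $(1-\delta)^{|\eta|}G(\eta)$, whose difference from $G(\eta)$ after dividing by $\delta$ is $-|\eta|G(\eta)+O(\delta)$, matching $L_0$; the term $\xi=\eta,\ |\omega|=1$ together with the term $|\eta\setminus\xi|=1,\ \omega=\emptyset$ combine, after dividing by $\delta$, to reproduce exactly $L_1G$ as written in \eqref{L1} (here I would use $\prod_{y\in\xi}e^{-E^\phi(y,\omega)}$ with $|\omega|=1$ to recover the factor $e^{-E^\phi(x,\xi)}$ by symmetry/evenness of $\phi$, and the single factor $e^{-E^\phi(y',\omega)}-1 = e^{-\phi(x-y')}-1$ to recover $e_\lambda(e^{-\phi(x-\cdot)}-1,\eta\setminus\xi)$). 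All remaining contributions carry at least an extra power $\delta$ and must be estimated in $\|\cdot\|_C$.

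So the core is a bound of the form $\|(\hat L_\delta-\hat L)G\|_C \le R_1 + R_2 + R_3$, where $R_1$ bounds the $\xi=\eta,\ \omega=\emptyset$ remainder (coming from $\frac{1}{\delta}((1-\delta)^{|\eta|}-1)+|\eta|$, which is dominated pointwise by, say, $|\eta|^2\delta$ hence by $4^{|\eta|}\delta$), $R_2$ bounds the $|\eta\setminus\xi|+|\omega|\ge2$ tail, and $R_3$ bounds the cross terms of order $\delta$. For each I would run the same Minlos-type computation as in Proposition~\ref{prop_contr}: integrate $C^{|\eta|}$ over $\eta$, split $\eta$ into the active part and the spectator part $\xi$, use \eqref{keypos} to turn $\prod_{y'\in\eta\setminus\xi}|e^{-E^\phi(y',\omega)}-1|$ into the exponential factor $e^{CC_\phi|\omega|}$, and finally perform the $\lambda(d\xi)\lambda(d\omega)$ integration to collapse everything onto $\int_{\Gamma_0}|G(\omega)|[\cdots]^{|\omega|}\lambda(d\omega)$. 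The key point is that each "extra $\delta$" either sits in front as a literal factor $\delta$, or is absorbed by replacing a factor $C$ by $2C$ in the resulting weight — which is precisely why the bound is in the $\|\cdot\|_{2C}$-norm and why $G\in\mathcal L_{2C}$ is needed. Condition \eqref{smallparam}, $ze^{CC_\phi}\le C$, guarantees that the geometric-type series in the number of born/removed points still converge (the effective per-point weight stays $\le 2C$), so the constant in front is bounded; tracking the three pieces should give a combined constant $\le 3$.

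The main obstacle is the bookkeeping in the cross terms: one must verify that the $\delta\to 0$ limit of the $O(\delta^0)$ part reproduces $\hat L$ \emph{exactly}, with no leftover $O(1)$ discrepancy, before estimating the genuine $O(\delta)$ remainder — in particular checking that the $\xi=\eta$ with $|\omega|=1$ contribution and the $|\eta\setminus\xi|=1$ with $\omega=\emptyset$ contribution assemble into the single expression \eqref{L1} rather than something that only agrees up to a sign or a combinatorial factor. Once the algebraic identity $\hat L_\delta G = \hat L G + \delta\,(\text{explicitly bounded remainder})$ is pinned down, the norm estimate is a routine repetition of the calculation in Proposition~\ref{prop_contr}: every remainder term is of the schematic form (constant)$\cdot\delta\cdot\int_{\Gamma_0}|G(\omega)|\,(2C)^{|\omega|}\lambda(d\omega)$ or is controlled by $\delta\int_{\Gamma_0}|G(\eta)|4^{|\eta|}C^{|\eta|}\lambda(d\eta)=\delta\|G\|_{2C}$ after noting $4^{|\eta|}C^{|\eta|}=(2C)^{2|\eta|}/1$... more carefully, $|\eta|^2 C^{|\eta|}\le \mathrm{const}\cdot(2C)^{|\eta|}$ is false for $C$ small, so one instead bounds $|\eta|^2\le 2^{|\eta|}$ and uses $2^{|\eta|}C^{|\eta|}=(2C)^{|\eta|}$ directly, giving $\le\|G\|_{2C}$. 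Summing the three contributions with their constants yields \eqref{ineq}.
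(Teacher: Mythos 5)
There is a genuine gap in the central algebraic identification. In \eqref{apprsemigroup} the only powers of $\delta$ are carried by $(1-\delta)^{|\xi|}$ and $(z\delta)^{|\omega|}$; the factors $\prod_{y'\in\eta\setminus\xi}\bigl(e^{-E^\phi(y',\omega)}-1\bigr)$ attached to $\eta\setminus\xi$ carry no $\delta$ at all. Hence splitting the $\xi$-sum by $|\eta\setminus\xi|$ does not produce a grading in $\delta$, and your bookkeeping of the first-order part is wrong on two counts. First, the piece ``$|\eta\setminus\xi|=1$, $\omega=\emptyset$'' is identically zero, because $E^\phi(y',\emptyset)=0$ makes each factor $e^{-E^\phi(y',\emptyset)}-1$ vanish; it cannot contribute anything to $L_1$. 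Second, the piece ``$\xi=\eta$, $|\omega|=1$'' reproduces, after dividing by $\delta$, only the single summand $\xi=\eta$ of \eqref{L1}, namely $z\int G(\eta\cup x)e^{-E^\phi(x,\eta)}\,dx$. The remaining summands of $L_1$, with $\xi\subsetneq\eta$ and the factor $e_\lambda(e^{-\phi(x-\cdot)}-1,\eta\setminus\xi)$, come precisely from the $|\omega|=1$ contributions with \emph{arbitrary} $\xi\subset\eta$ — exactly the terms you relegated to the ``cross terms'' of the remainder. Those terms are of order $\delta^1$ in $\hat P_\delta$, i.e.\ of order one after dividing by $\delta$, so if you estimate them as a remainder instead of cancelling them against $L_1$, the right-hand side of your bound does not tend to zero and \eqref{ineq} cannot follow from your scheme.

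The repair is to grade by $|\omega|$ only, keeping the full sum over $\xi\subset\eta$ in each block (this is what the paper does): the $\omega=\emptyset$ block equals $(1-\delta)^{|\eta|}G(\eta)$ and its comparison with $1\!\!1+\delta L_0$ is controlled by the exact binomial identity, giving the pointwise bound $\delta\,2^{|\eta|}$ (note that your intermediate pointwise bounds $|\eta|^2\delta$ and $|\eta|^2\le 2^{|\eta|}$ are not valid uniformly — the latter already fails at $|\eta|=3$ — so use the binomial sum directly); the $|\omega|=1$ block divided by $\delta$ differs from $L_1G$ only through the factor $(1-\delta)^{|\xi|}$, and $1-(1-\delta)^{|\xi|}\le\delta|\xi|$ together with Lemma~\ref{Minlos}, \eqref{smallparam} and $n(n-1)\le 2^n$ yields another $\delta\|G\|_{2C}$; finally the $|\omega|\ge2$ block is genuinely $O(\delta^2)$ because $(z\delta)^{|\omega|}\le\delta^2 z^{|\omega|}$, and the computation of Proposition~\ref{prop_contr} with \eqref{keypos} bounds it by $\delta\|G\|_{2C}$ as well. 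Your overall strategy (contraction-type estimate plus Minlos-type integration, with the extra $\delta$ either explicit or absorbed into the $2C$ weight) is the right one, but the decomposition must be by the number of newly born points $|\omega|$, not by a joint smallness of $|\eta\setminus\xi|$ and $|\omega|$.
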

\begin{proof}

Let us denote
\begin{align}
\bigl( \hat{P}_{\delta }^{\left( 0\right) }G\bigr) \left( \eta
\right) =&\sum_{\xi \subset \eta }\left( 1-\delta \right)
^{\left\vert \xi \right\vert }G\left( \xi \right) 0^{\left\vert \eta
\setminus \xi \right\vert }=\left( 1-\delta \right) ^{\left\vert
\eta \right\vert }G\left(
\eta \right) ; \label{P0}\\
\bigl( \hat{P}_{\delta }^{\left( 1\right) }G\bigr) \left( \eta
\right) =& \, z\delta \sum_{\xi \subset \eta }\left( 1-\delta \right)
^{\left\vert \xi \right\vert }\int_{{{\mathbb R}^d}}G\left( \xi \cup
x\right)\\&\times \prod\limits_{y\in \xi }e^{-\phi \left(
y-x\right) }\prod\limits_{y\in \eta \setminus \xi }\left( e^{-\phi
\left( y-x\right) }-1\right) dx; \label{P1}
\end{align}
and
\begin{equation}
\hat{P}_{\delta }^{\left( \geq 2\right) }=\hat{P}_{\delta }-\left(
\hat{P}_{\delta }^{\left( 0\right) }+\hat{P}_{\delta }^{\left(
1\right) }\right) \label{P2}.
\end{equation}

Clearly
\begin{align}\label{trin}
\bigl\| (\hat{L}_\delta -\hat{L})G\bigr\|_C&=\left\Vert
\frac{1}{\delta }\left( \hat{P}_{\delta }G-G\right) -\hat{L} G\right\Vert_{C} \\
&\leq \left\Vert \frac{1}{\delta }\left( \hat{P}_{\delta }^{\left(
0\right) }G-G\right) -L_{0}G\right\Vert_{C}+\left\Vert
\frac{1}{\delta }\hat{P}_{\delta }^{\left( 1\right)
}G-L_{1}G\right\Vert_{C}+\frac{1}{\delta } \left\Vert
\hat{P}_{\delta }^{\left( \geq 2\right) }G\right\Vert_{C}.\notag
\end{align}
Now we estimate each of the terms in \eqref{trin} separately. By
\eqref{L0} and \eqref{P0}, we have
\begin{equation*}
\left\Vert \frac{1}{\delta }\left( \hat{P}_{\delta }^{\left(
0\right) }G-G\right) -L_{0}G\right\Vert_{C}  =\int_{\Gamma
_{0}}\left\vert \frac{\left( 1-\delta \right) ^{\left\vert \eta
\right\vert }-1}{\delta }+\left\vert \eta \right\vert \right\vert
\left\vert G\left( \eta \right) \right\vert C^{\left\vert \eta
\right\vert }\lambda \left( d \eta \right).
\end{equation*}
But, for any $|\eta|\geq2$
\begin{align*}
\left\vert \frac{\left( 1-\delta \right) ^{\left\vert \eta
\right\vert }-1}{\delta }
+\left\vert \eta \right\vert \right\vert &=\left\vert\sum_{k=2}^{|\eta|} \binom{|\eta|}{k}(-1)^{k}\delta^{k-1}\right\vert\\
&=\delta\left\vert\sum_{k=2}^{|\eta|} \binom{|\eta|}{k}(-1)^{k}\delta^{k-2}\right\vert\leq \delta \sum_{k=2}^{|\eta|} \binom{|\eta|}{k} <\delta\cdot 2^{|\eta|}.
\end{align*}
Therefore,
\begin{equation}\label{est0}
\left\Vert \frac{1}{\delta }\left( \hat{P}_{\delta }^{\left(
0\right) }G-G\right) -L_{0}G\right\Vert_{C}\leq \delta \|G\|_{2C}.
\end{equation}

Next, by \eqref{L1} and \eqref{P1}, one can write
\begin{align*}
\left\Vert \frac{1}{\delta }\hat{P}_{\delta }^{\left( 1\right)
}G-L_{1}G\right\Vert_{C}  = & \, z\int_{\Gamma_{0}}\biggl\vert \sum_{\xi \subset \eta }\left(
\left( 1-\delta \right) ^{\left\vert \xi \right\vert }-1\right)
\int_{{{\mathbb{R}} ^{d}}}G\left( \xi \cup x\right)
\prod\limits_{y\in \xi }e^{-\phi \left( y-x\right) }\\& \times
\prod\limits_{y\in \eta \setminus \xi }\left( e^{-\phi \left(
y-x\right) }-1\right) dx\biggr\vert C^{\left\vert \eta \right\vert
}\lambda
\left( d \eta \right)  \\
\leq & \, z\int_{\Gamma_{0}}\int_{\Gamma_{0}}\left( 1-\left( 1-\delta
\right) ^{\left\vert \xi \right\vert }\right)
\int_{{{\mathbb{R}}^{d}}}\left\vert G\left( \xi \cup x\right)
\right\vert \prod\limits_{y\in \xi }e^{-\phi \left( y-x\right)
}\\& \times \prod\limits_{y\in \eta }\left( 1-e^{-\phi \left(
y-x\right) }\right) dxC^{\left\vert \xi \right\vert }C^{\left\vert
\eta \right\vert } \lambda \left( d \xi \right) \lambda \left( d
\eta \right) ,
\end{align*}
where we have used Lemma~\ref{Minlos}. Note that for any $\left\vert
\xi \right\vert \geq 1$
\begin{equation*}
1-\left( 1-\delta \right) ^{\left\vert \xi \right\vert }=\delta
\sum_{k=0}^{\left\vert \xi \right\vert -1}\left( 1-\delta \right)
^{k}\leq \delta \left\vert \xi \right\vert
\end{equation*}
Then, by \eqref{smallparam} and \eqref{integrability}, one may
estimate
\begin{align}
\left\Vert \frac{1}{\delta }\hat{P}_{\delta }^{\left( 1\right)
}G-L_{1}G\right\Vert_{C} \label{spec1} &\leq z\delta \int_{\Gamma_{0}}\left\vert \xi \right\vert
\int_{{{\mathbb{R}}^{d}}}\left\vert G\left( \xi \cup x\right)
\right\vert dxC^{\left\vert \xi \right\vert }e^{CC_{\phi }}\lambda
\left( d \xi \right)   \\
&\leq z \delta \int_{\Gamma_{0}}\left\vert \xi \right\vert \left(
\left\vert \xi \right\vert -1\right) \left\vert G\left( \xi \right)
\right\vert C^{\left\vert \xi \right\vert -1}e^{CC_{\phi }}\lambda
\left( d \xi \right) . \notag
\end{align}
Since $n\left( n-1\right) \leq 2^{n}$, $n\geq 1$ and by
\eqref{smallparam}, the latter expression can be bounded by
\[
\delta \int_{\Gamma_{0}}\left\vert G\left( \xi \right) \right\vert
\left( 2C\right) ^{\left\vert \xi \right\vert }\lambda \left(  d\xi
\right).
\]

Finally, Lemma~\ref{Minlos}, \eqref{keypos} and bound
$e^{-E^\phi(y,\omega)}\leq 1$, imply (set $\Gamma_{0}^{\left( \geq
2\right) }:=\bigsqcup_{n\geq2}\Gamma ^{(n)}$)
\begin{align}
\left\Vert \frac{1}{\delta }\hat{P}_{\delta }^{\left( \geq 2\right)
}G\right\Vert_{C} \leq & \, \frac{1}{\delta }\int_{\Gamma_{0}}\sum_{\xi
\subset \eta }\left( 1-\delta \right) ^{\left\vert \xi \right\vert
}\int_{\Gamma_{0}^{\left( \geq 2\right) }}\left( z\delta \right)
^{\left\vert \omega \right\vert }\left\vert G\left(
\xi \cup \omega \right) \right\vert \label{spec01}\\
& \times \prod\limits_{y\in \xi }e^{-E^\phi\left( y,\omega
\right) }\prod\limits_{y\in \eta \setminus \xi }\left(
1-e^{-E^\phi\left( y,\omega \right) }\right) \lambda \left( d \omega
\right) C ^{\left\vert \eta \right\vert }\lambda \left( d \eta
\right) \notag
\\
\leq & \, \delta \int_{\Gamma_{0}}\sum_{\xi \subset \eta }\left(
1-\delta \right) ^{\left\vert \xi \right\vert }\int_{\Gamma
_{0}^{\left( \geq 2\right) }}z^{\left\vert \omega \right\vert
}\left\vert G\left( \xi \cup \omega \right) \right\vert \notag
\\
& \times \prod\limits_{y\in \xi }e^{-E^\phi\left( y,\omega
\right) }\prod\limits_{y\in \eta \setminus \xi }\left(
1-e^{-E^\phi\left( y,\omega \right) }\right) \lambda \left( d \omega
\right) C
^{\left\vert \eta \right\vert } \lambda \left( d \eta \right)  \notag\\
\leq &  \, \delta \int_{\Gamma_{0}}\sum_{\xi \subset \eta }\left(
1-\delta \right) ^{\left\vert \xi \right\vert }\int_{\Gamma
_{0}}z^{\left\vert \omega \right\vert }\left\vert G\left( \xi \cup
\omega \right) \right\vert \notag\\
&\times \prod\limits_{y\in \xi }e^{-E^\phi\left( y,\omega
\right) }\prod\limits_{y\in \eta \setminus \xi }\left(
1-e^{-E^\phi\left( y,\omega \right) }\right) \lambda \left( d \omega
\right) C ^{\left\vert \eta \right\vert }\lambda \left(
d \eta \right)  \notag\\
\leq & \, \delta \int_{\Gamma_{0}}\int_{\Gamma_{0}}\left( 1-\delta
\right) ^{\left\vert \xi \right\vert }z^{\left\vert \omega
\right\vert }\left\vert G\left( \xi \cup
\omega \right) \right\vert \notag\\
& \times \int_{\Gamma_{0}}\prod\limits_{y\in \eta }\left(
1-e^{-E^\phi\left( y,\omega \right) }\right)  C ^{\left\vert \eta
\right\vert }\lambda \left( d \eta \right) \lambda \left( d \omega
\right) C ^{\left\vert \xi \right\vert }d\lambda \left( \xi \right)
\notag\\
\leq &  \, \delta \int_{\Gamma_{0}}\int_{\Gamma_{0}}\left( 1-\delta
\right) ^{\left\vert \xi \right\vert }z^{\left\vert \omega
\right\vert }\left\vert G\left( \xi \cup \omega \right) \right\vert
e^{CC_\phi\vert\omega\vert} d\lambda \left( \omega \right) C
^{\left\vert \xi \right\vert }d\lambda \left( \xi \right) \notag\\
\leq & \, \delta \int_{\Gamma_{0}}\left[ \left( 1-\delta \right)
C+ze^{CC_{\phi }}\right] ^{\left\vert \omega \right\vert }\left\vert
G\left(
\omega \right) \right\vert d\lambda \left( \omega \right)  \notag\\
\leq & \,  \delta \int_{\Gamma_{0}}\left[ \left( 2-\delta \right)
C\right] ^{\left\vert \omega \right\vert }\left\vert G\left( \omega
\right) \right\vert d\lambda \left( \omega \right) \leq \delta
\int_{\Gamma_{0}}\left\vert G\left( \omega \right) \right\vert
\left( 2C\right) ^{\left\vert \omega \right\vert }d\lambda \left(
\omega \right) .\notag
\end{align}
Combining inequalities \eqref{est0}--\eqref{spec01} we obtain the
assertion of the proposition.
\end{proof}

We will need the following results in the sequel.

\begin{lemma}[{\cite[Corollary 3.8]{EK}}] \label{EK_res}
Let $A$ be a linear operator on a~Banach space $L$ with $D\left(
A\right) $ dense in $L$, and let $|\!|\!|\cdot |\!|\!|$ be a norm on
$D\left( A\right) $ with respect to which $D\left( A\right) $ is a
Banach space. For $n\in \mathbb{N}$ let $T_{n}$ be a linear
$\left\Vert \cdot \right\Vert $-contraction on $L$ such that
$T_{n}:D\left( A\right) \rightarrow D\left( A\right) $, and define
$A_{n}=n\left( T_{n}-1\right) $. Suppose there exist $\omega \geq 0$
and a sequence $\left\{ \varepsilon_{n}\right\} \subset \left(
0;+\infty \right) $ tending to zero such that for $n\in \mathbb{N}$
\begin{equation}\label{approperEK}
\left\Vert \left( A_{n}-A\right) f\right\Vert \leq \varepsilon_{n}
|\!|\!| f |\!|\!|,~f\in D\left( A\right)
\end{equation}
and
\begin{equation}\label{psevdocontr}
\bigl|\!\bigl|\!\bigl| T_{n}\upharpoonright_{D(A)}
\bigr|\!\bigr|\!\bigr| \leq 1+\frac{\omega }{n}.
\end{equation}
Then $A$ is closable and the closure of $A$ generates a strongly
continuous contraction semigroup on $L$.
\end{lemma}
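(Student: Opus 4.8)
The plan is to establish this via the Trotter--Kurtz approximation scheme: I would build the desired semigroup as the limit of the uniformly continuous contraction semigroups generated by the bounded operators $A_n$, and then identify its generator with $\overline{A}$. First I would note that, $T_n$ being a $\|\cdot\|$-contraction, $A_n=n(T_n-\1)$ is a bounded operator and generates the uniformly continuous contraction semigroup $T_n(t)=e^{tA_n}=e^{-nt}\sum_{k\ge0}\frac{(nt)^k}{k!}T_n^k$ on $L$ (contractivity: $\|T_n(t)\|\le e^{-nt}\sum_k\frac{(nt)^k}{k!}\|T_n\|^k\le1$). Using \eqref{psevdocontr} in the form $\n T_n^kf\n\le(1+\omega/n)^k\n f\n$ for $f\in D(A)$, I would check that this series also converges in the Banach space $(D(A),\n\cdot\n)$, so that $T_n(t)D(A)\subset D(A)$, the map $t\mapsto T_n(t)f$ is $\n\cdot\n$-continuous, and $\n T_n(t)f\n\le e^{\omega t}\n f\n$ with a bound uniform in $n$. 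In the same way, for $\la>\omega$ the Neumann series $(\la-A_n)^{-1}=\frac{1}{\la+n}\sum_{k\ge0}\bigl(\tfrac{n}{\la+n}\bigr)^{k}T_n^k$ converges in $(D(A),\n\cdot\n)$, giving $(\la-A_n)^{-1}D(A)\subset D(A)$ with $\n(\la-A_n)^{-1}f\n\le(\la-\omega)^{-1}\n f\n$ uniformly in $n$.

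Next I would prove the key convergence estimate. For $f\in D(A)$ and $0\le s\le t$, differentiating $s\mapsto T_n(t-s)T_m(s)f$ (all operators are bounded and the semigroups norm-differentiable) gives
\begin{equation*}
T_m(t)f-T_n(t)f=\int_0^{t}T_n(t-s)\,(A_m-A_n)\,T_m(s)f\,ds ,
\end{equation*}
and since $T_m(s)f\in D(A)$, hypothesis \eqref{approperEK} together with $\|T_n(t-s)\|\le1$ yields $\|T_m(t)f-T_n(t)f\|\le(\eps_m+\eps_n)\n f\n\int_0^{t}e^{\omega s}\,ds$. Hence $(T_n(t)f)_n$ is $\|\cdot\|$-Cauchy, locally uniformly in $t$; since $D(A)$ is $\|\cdot\|$-dense in $L$ and $\|T_n(t)\|\le1$, a $3\eps$-argument shows that $T(t)g:=\lim_nT_n(t)g$ exists for every $g\in L$, locally uniformly in $t$. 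I would then verify routinely that $T=(T(t))_{t\ge0}$ is a strongly continuous contraction semigroup on $L$: contractivity and $T(0)=\1$ are immediate, the semigroup law follows from $T_n(t)T_n(s)g\to T(t)T(s)g$, and strong continuity from the locally uniform convergence of the continuous maps $t\mapsto T_n(t)g$.

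Finally I would identify the generator $\bar A$ of $T$ with $\overline{A}$. On one hand, for $f\in D(A)$, letting $n\to\infty$ in $T_n(t)f-f=\int_0^{t}T_n(s)A_nf\,ds$ (using $\|A_nf-Af\|\le\eps_n\n f\n\to0$ and the locally uniform convergence $T_n(s)\to T(s)$) gives $T(t)f-f=\int_0^{t}T(s)Af\,ds$, hence $f\in D(\bar A)$ and $\bar Af=Af$; so $A\subset\bar A$, $A$ is closable, and $\overline{A}\subset\bar A$. On the other hand, fixing $\la>\omega$, for $g\in D(A)$ the elements $f_n:=(\la-A_n)^{-1}g\in D(A)$ satisfy $(\la-A)f_n=g+(A-A_n)f_n$ with $\|(A-A_n)f_n\|\le\eps_n\n f_n\n\le\eps_n(\la-\omega)^{-1}\n g\n\to0$; since $D(A)$ is dense in $L$, this makes $(\la-A)D(A)$ dense in $L$. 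Then, given $h\in D(\bar A)$, I would choose $v_k\in D(A)$ with $(\la-A)v_k\to(\la-\bar A)h$ and, using $A\subset\bar A$ and boundedness of $(\la-\bar A)^{-1}$, conclude $v_k\to h$ and $Av_k=\la v_k-(\la-A)v_k\to\bar Ah$, so that $h\in D(\overline{A})$. Thus $\overline{A}=\bar A$ generates the strongly continuous contraction semigroup $T$, which is the assertion.

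I expect the main obstacle to be the two-norm bookkeeping in the first step — verifying that $T_n(t)$ and the resolvents $(\la-A_n)^{-1}$ leave $D(A)$ invariant with $\n\cdot\n$-bounds uniform in $n$ — since this is exactly what powers both the convergence estimate of the second step and the range-density argument of the third; the remainder is the classical Trotter--Kurtz argument.
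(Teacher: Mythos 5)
First, note that the paper does not prove this lemma: it is quoted from \cite[Corollary~3.8]{EK}, so your argument can only be measured against the standard proof of that result. Your route --- build $T_n(t)=e^{tA_n}$, show $(T_n(t)f)_n$ is $\|\cdot\|$-Cauchy via the integral identity, pass to the limit semigroup and identify its generator with $\overline A$ --- is the classical Trotter--Kurtz scheme, and the second and third steps are sound once the invariance statements they rely on are secured. The genuine gap is exactly the point you set aside as ``two-norm bookkeeping'': the claim that $e^{tA_n}$ maps $D(A)$ into $D(A)$ with $\n e^{tA_n}f\n\le e^{\omega t}\n f\n$. Convergence of the exponential series in the Banach space $(D(A),\n\cdot\n)$ does produce some element of $D(A)$, but nothing in the hypotheses identifies this $\n\cdot\n$-limit with the element $e^{tA_n}f$ obtained as the $\|\cdot\|$-limit of the same partial sums: the lemma assumes no relation whatsoever between $\n\cdot\n$ and $\|\cdot\|$, so the inclusion $(D(A),\n\cdot\n)\hookrightarrow (L,\|\cdot\|)$ need be neither continuous nor closed, and a sequence in $D(A)$ may a priori have different limits in the two norms. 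Since the bound $\n T_m(s)f\n\le e^{\omega s}\n f\n$ for the element $T_m(s)f\in L$ is precisely what drives your Cauchy estimate, this is a missing step, not bookkeeping. (In the paper's application one has $\n\cdot\n=\|\cdot\|_{2C}\ge\|\cdot\|_C$, the embedding is continuous and your identification is immediate; but then you have proved less than the stated lemma.)

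By contrast, the analogous claim for the resolvents can be justified without any compatibility of the norms: apply $\la-A_n$ to the partial sums of the Neumann series, telescope to get $f-(\tfrac{n}{\la+n})^{N+1}T_n^{N+1}f$, use that $\la-A_n$ is $\n\cdot\n$-bounded on $D(A)$ by \eqref{psevdocontr} and injective on $L$, and conclude that the $\n\cdot\n$-sum really is $(\la-A_n)^{-1}f$, with $\n(\la-A_n)^{-1}f\n\le(\la-\omega)^{-1}\n f\n$ for $\la>\omega$. This is why the standard Ethier--Kurtz proof avoids the semigroups $e^{tA_n}$ altogether: each $A_n$ is dissipative, so \eqref{approperEK} gives $\|(\la-A)f\|\ge\|(\la-A_n)f\|-\eps_n\n f\n\ge\la\|f\|-\eps_n\n f\n\to\la\|f\|$, i.e.\ $A$ is dissipative; and for $\la>\omega$, $g\in D(A)$, one has $(\la-A)(\la-A_n)^{-1}g=g+(A_n-A)(\la-A_n)^{-1}g$ with error at most $\eps_n(\la-\omega)^{-1}\n g\n\to0$, so $(\la-A)D(A)$ is dense in $L$; the Lumer--Phillips/Hille--Yosida generation theorem then yields closability of $A$ and generation by $\overline A$. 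If you want to keep your semigroup-approximation route, you must either add the (harmless in practice) assumption $\|\cdot\|\le c\,\n\cdot\n$ on $D(A)$, or find a substitute argument for the invariance of $D(A)$ under $e^{tA_n}$; as written, that step is unsupported.
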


\begin{lemma}[cf. {\cite[Theorem 6.5]{EK}}] \label{EK_res-conv}
Let $L, L_n$, $n\in{\mathbb N}$ be Banach spaces, and $p_n:
L\rightarrow L_n$ be bounded linear transformation, such that
$\sup_n \|p_n\|<\infty $. For any $n\in{\mathbb N}$, let $T_n$ be a
linear contraction on $L_n$, let $\varepsilon_n>0$ be such that
$\lim_{n\rightarrow \infty} \varepsilon_n =0$, and put
$A_n=\varepsilon_n^{-1}(T_n - 1\!\!1)$. Let $T_t$ be a strongly
continuous contraction semigroup on $L$ with generator $A$ and let
$D$ be a core for $A$. Then the following are equivalent:
\begin{enumerate}
\item For each $f\in L$, $T_n^{[t/\varepsilon_n]} p_n f\rightarrow p_n
T_t f$ in $L_n$ for all $t\geq0$ uniformly on bounded intervals.
Here and below $[\,\cdot\,\,]$ mean the entire part of a real
number.

\item For each $f\in D$, there exists $f_n\in L_n$ for each
$n\in{\mathbb N}$ such that $f_n \rightarrow p_n f$ and $A_n f_n
\rightarrow p_n Af$ in $L_n$.
\end{enumerate}
\end{lemma}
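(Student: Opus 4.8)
The statement is, up to notation, the Trotter--Kato approximation theorem for discrete-time semigroups over a varying sequence of Banach spaces --- it is \cite[Theorem~6.5]{EK}. The plan is to run the classical argument, with convergence of resolvents as the bridge between the infinitesimal condition (2) and the semigroup condition (1); the maps $p_n$ enter only through the uniform bound $\sup_n\|p_n\|<\infty$, and no intertwining with the operators is needed.

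\emph{Step 1: (2) implies convergence of resolvents.} Since $A$ generates a strongly continuous contraction semigroup, $(0,\infty)\subset\rho(A)$. For $\la>0$ the bounded operator $\la\1-A_n=(\la+\eps_n^{-1})\1-\eps_n^{-1}T_n$ is invertible on $L_n$, with
\[
(\la\1-A_n)^{-1}=\frac{\eps_n}{1+\la\eps_n}\sum_{k=0}^{\infty}\frac{T_n^{k}}{(1+\la\eps_n)^{k}},\qquad\bigl\|(\la\1-A_n)^{-1}\bigr\|\le\frac1\la,
\]
the series converging because $\|T_n\|\le1$. Fix $g\in L$ and put $h:=(\la\1-A)^{-1}g\in D(A)$; since $D$ is a core, choose $h^{(m)}\in D$ with $h^{(m)}\to h$ and $Ah^{(m)}\to Ah$, whence $g^{(m)}:=(\la\1-A)h^{(m)}\to g$. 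For fixed $m$, hypothesis (2) supplies $h_n^{(m)}\in L_n$ with $h_n^{(m)}\to p_nh^{(m)}$ and $A_nh_n^{(m)}\to p_nAh^{(m)}$, hence $(\la\1-A_n)h_n^{(m)}\to p_ng^{(m)}$; applying the contraction $(\la\1-A_n)^{-1}$ gives $(\la\1-A_n)^{-1}p_ng^{(m)}\to p_nh^{(m)}$. Decomposing $(\la\1-A_n)^{-1}p_ng-p_nh$ into the three differences governed respectively by $\|g-g^{(m)}\|$ (via $\|(\la\1-A_n)^{-1}\|\le1/\la$), by the limit just obtained, and by $\|h^{(m)}-h\|$ (via $\sup_n\|p_n\|<\infty$), a diagonal argument in $m$ yields
\[
(\la\1-A_n)^{-1}p_ng\longrightarrow p_n(\la\1-A)^{-1}g,\qquad g\in L,\ \la>0.
\]

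\emph{Step 2: resolvent convergence and (2) imply (1).} For $f\in D$ let $f_n$ be the sequence furnished by (2), so $f_n\to p_nf$ and $M:=\sup_n\|A_nf_n\|<\infty$. Comparing $T_n^{[t/\eps_n]}$ with the Poissonization $e^{tA_n}=e^{-t/\eps_n}\sum_{k\ge0}\frac{(t/\eps_n)^k}{k!}T_n^{k}$, the classical contraction estimate $\|(e^{m(T_n-\1)}-T_n^{m})v\|\le\sqrt m\,\|(T_n-\1)v\|$ gives, with $m=[t/\eps_n]$,
\[
\bigl\|T_n^{[t/\eps_n]}f_n-e^{tA_n}f_n\bigr\|\le\bigl(\sqrt{t/\eps_n}+1\bigr)\eps_n\|A_nf_n\|\le\bigl(\sqrt{t\eps_n}+\eps_n\bigr)M,
\]
which tends to $0$ uniformly on bounded $t$-intervals. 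On the other hand $\{e^{tA_n}\}_{t\ge0}$ is a strongly continuous contraction semigroup with generator $A_n$, and the resolvent convergence of Step~1 together with $f_n\to p_nf$, $A_nf_n\to p_nAf$ is precisely the hypothesis of the Trotter--Kato theorem for strongly continuous semigroups (itself deduced from Step~1), so $e^{tA_n}f_n\to p_nT_tf$ uniformly on bounded intervals. Finally $\|T_n^{[t/\eps_n]}p_nf-T_n^{[t/\eps_n]}f_n\|\le\|p_nf-f_n\|\to0$ since $T_n$ is a $\|\cdot\|$-contraction; adding the three bounds proves (1) for $f\in D$, and the passage to all $f\in L$ is by density and $\sup_n\|p_n\|<\infty$. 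The converse (1) implies (2) is the standard averaging device: for $f\in D$ set $f_n:=h_n^{-1}\int_0^{h_n}T_n^{[s/\eps_n]}p_nf\,ds$ with $h_n\downarrow0$ chosen to decay slowly enough that $h_n/\eps_n\to\infty$; then $f_n\to p_nf$ by (1) and strong continuity of $t\mapsto T_tf$ at $t=0$, while, using $[s/\eps_n]+1=[(s+\eps_n)/\eps_n]$ to telescope,
\[
A_nf_n=\frac{1}{\eps_nh_n}\int_{h_n}^{h_n+\eps_n}T_n^{[u/\eps_n]}p_nf\,du-\frac{1}{h_n}p_nf,
\]
whose main part $h_n^{-1}(p_nT_{h_n}f-p_nf)\to p_nAf$ and whose boundary remainder is killed by $h_n/\eps_n\to\infty$, so that $A_nf_n\to p_nAf$.

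I expect the main obstacle to be Step~2, i.e.\ converting the resolvent convergence into convergence --- uniform on bounded $t$-intervals --- of the discrete products $T_n^{[t/\eps_n]}$ in the setting of varying Banach spaces; this is where the Poissonization estimate and the continuous Trotter--Kato theorem enter. Since all the estimates there coincide with those of \cite[Theorem~6.5]{EK} and the maps $p_n$ intervene only through $\sup_n\|p_n\|<\infty$, in the write-up I would simply invoke that theorem, noting that the present formulation differs from it in notation only.
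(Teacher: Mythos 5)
The paper itself gives no proof of this lemma: it is imported, verbatim up to notation, from \cite[Theorem~6.5]{EK}, so there is no in-paper argument to compare against. Your reconstruction is precisely the standard proof of that theorem: deduce resolvent convergence from (2) via the explicit Neumann series for $(\lambda\1-A_n)^{-1}$ together with a core approximation; pass from $e^{tA_n}$ to the discrete powers $T_n^{[t/\varepsilon_n]}$ by the Chernoff/Poissonization estimate $\|(e^{m(T_n-\1)}-T_n^m)v\|\le\sqrt m\,\|(T_n-\1)v\|$ combined with the continuous-parameter Trotter--Kato theorem in the varying-Banach-space form; and obtain the converse by the averaging device. This is correct in substance, and since you end by invoking \cite[Theorem~6.5]{EK} outright, your treatment coincides with the paper's. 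One small imprecision to repair if you do write out the converse: the condition $h_n/\varepsilon_n\to\infty$ alone does not control the error made when replacing $T_n^{[u/\varepsilon_n]}p_nf$ by $p_nT_uf$ inside the average --- that error is of order $\delta_n/h_n$, where $\delta_n$ is the (a priori unknown) uniform rate of convergence in (1) on the relevant bounded interval --- so $h_n$ must in addition be chosen to decay more slowly than $\delta_n$ (or one works with fixed $h$ and diagonalizes at the end); this is routine, since (2) only requires the existence of some sequence $f_n$, but as stated your choice of $h_n$ is not quite sufficient.
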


And now we are able to show the existence of the semigroup on
${\mathcal L}_C$.

\begin{theorem}\label{semigroup0}
Let
\begin{equation}\label{verysmallparam}
    z\leq \min\bigl\{Ce^{-CC_{\phi }} ; 2Ce^{-2CC_{\phi }}\bigr\} .
\end{equation}
Then $\bigl(\hat{L}, {\mathcal L}_{2C}\bigr)$ from
Proposition~\ref{prop-oper} is a closable linear operator in
${\mathcal L}_C$ and its closure $\bigl(\hat{L}, D(\hat{L})\bigr)$
generates a strongly continuous contraction semigroup $\hat{T}_t$ on
${\mathcal L}_C$.
\end{theorem}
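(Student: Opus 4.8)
The plan is to apply Lemma~\ref{EK_res} directly with the Banach space $L=\mathcal{L}_C$, the operator $A=\hat{L}$ with domain $D(A)=\mathcal{L}_{2C}$, the secondary norm $\n\cdot\n=\|\cdot\|_{2C}$ (so that $(D(A),\n\cdot\n)$ is indeed a Banach space, being the completion of $\Bbs(\Gamma_0)$ with respect to $\|\cdot\|_{2C}$), and the contractions $T_n=\hat{P}_{1/n}$ for $n\geq2$ (so that $\delta=1/n\in(0;1)$). The hypotheses of the lemma then translate into facts we have already established, so the proof is essentially a matter of checking that the pieces fit.

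\begin{proof}
Note first that \eqref{verysmallparam} is precisely the condition $z e^{CC_\phi}\leq C$ together with $z e^{2CC_\phi}\leq 2C$; the former is \eqref{smallparam}, under which Proposition~\ref{prop-oper}, Proposition~\ref{prop_contr} and Proposition~\ref{prop_ineq} all apply. We verify the hypotheses of Lemma~\ref{EK_res}. Take $L=\mathcal{L}_C$, $A=\hat{L}$ with dense domain $D(A)=\mathcal{L}_{2C}$ (dense by Proposition~\ref{prop-oper}), and let $\n\cdot\n:=\|\cdot\|_{2C}$, with respect to which $D(A)=\mathcal{L}_{2C}$ is by definition a Banach space. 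For $n\in\N$, $n\geq2$, set $T_n:=\hat{P}_{1/n}$. By Proposition~\ref{prop_contr}, $T_n$ is a $\|\cdot\|_C$-contraction on $\mathcal{L}_C$. Moreover $T_n$ maps $\mathcal{L}_{2C}$ into itself: indeed, replacing $C$ by $2C$ in the estimate of Proposition~\ref{prop_contr} and using that \eqref{verysmallparam} gives $z e^{2CC_\phi}\leq 2C$, the very same computation shows that $\hat{P}_\delta$ is a $\|\cdot\|_{2C}$-contraction, so in particular $\bigl\|\hat{P}_{1/n}G\bigr\|_{2C}\leq\|G\|_{2C}$ for $G\in\mathcal{L}_{2C}$. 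Hence \eqref{psevdocontr} holds with $\omega=0$: $\n T_n\upharpoonright_{D(A)}\n\leq 1$. Next put $A_n:=n(T_n-\1)=\hat{L}_{1/n}$ in the notation of Proposition~\ref{prop_ineq}. That proposition gives, for all $G\in\mathcal{L}_{2C}$,
\begin{equation*}
\bigl\|(A_n-A)G\bigr\|_C=\bigl\|(\hat{L}_{1/n}-\hat{L})G\bigr\|_C\leq\frac{3}{n}\|G\|_{2C}=\varepsilon_n\n G\n,
\end{equation*}
with $\varepsilon_n:=3/n\to0$, which is exactly \eqref{approperEK}. All hypotheses of Lemma~\ref{EK_res} being satisfied, $A=\hat{L}$ is closable in $\mathcal{L}_C$ and its closure $\bigl(\hat{L},D(\hat{L})\bigr)$ generates a strongly continuous contraction semigroup $\hat{T}_t$ on $\mathcal{L}_C$.
\end{proof}

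The only point requiring a small argument beyond what is literally stated in the excerpt is the observation that $\hat{P}_\delta$ contracts the $\|\cdot\|_{2C}$-norm as well, which is what forces the second term $z\leq 2Ce^{-2CC_\phi}$ in \eqref{verysmallparam}; everything else is a direct citation of Propositions~\ref{prop-oper}, \ref{prop_contr}, \ref{prop_ineq} and Lemma~\ref{EK_res}. I do not anticipate a genuine obstacle here — the real work was front-loaded into the norm estimate \eqref{ineq} of Proposition~\ref{prop_ineq} and the contraction bound \eqref{contacrion}; Theorem~\ref{semigroup0} is the harvesting step. (One should just be slightly careful that the secondary norm $\n\cdot\n$ used in Lemma~\ref{EK_res} must be a norm on $D(A)$ making it complete and must dominate enough of $\|\cdot\|_C$ for the closedness argument; taking $\n\cdot\n=\|\cdot\|_{2C}$ works since $\|\cdot\|_C\leq\|\cdot\|_{2C}$ pointwise on $\Bbs(\Gamma_0)$, so convergence in $\n\cdot\n$ implies convergence in $\|\cdot\|_C$.)
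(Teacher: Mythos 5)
Your proof is correct and follows essentially the same route as the paper: both apply Lemma~\ref{EK_res} with $L=\mathcal{L}_C$, $A=\hat{L}$ on $\mathcal{L}_{2C}$, $\n\cdot\n=\|\cdot\|_{2C}$, $T_n=\hat{P}_{1/n}$, using Proposition~\ref{prop_contr} for the $\|\cdot\|_C$-contraction, the same proposition with $2C$ in place of $C$ (this is where $ze^{2CC_\phi}\leq 2C$ enters) for \eqref{psevdocontr} with $\omega=0$, and Proposition~\ref{prop_ineq} for \eqref{approperEK} with $\varepsilon_n=3/n$. No discrepancies worth noting.
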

\begin{proof} We apply Lemma~\ref{EK_res} for $L={\mathcal L}_C$,
$\bigl(A,D(A)\bigr)=\bigl(\hat{L}, {\mathcal L}_{2C}\bigr)$,
$|\!|\!| \cdot |\!|\!| :=\|\cdot\|_{2C}$; $T_{n}=\hat{P}_{\delta}$
and $A_{n}=n\left( T_{n}-1\right) =\frac{1}{\delta } (
\hat{P}_{\delta }-1\!\!1)=\hat{L}_\delta$, where
$\delta=\frac{1}{n}$, $n\geq 2$.

Condition $ze^{CC_{\phi }}\leq C$, Proposition~\ref{prop_contr}, and
Proposition~\ref{prop_ineq} provide that $T_n$, $n\geq 2$ are linear
$\|\cdot\|_C$-contractions and \eqref{approperEK} holds with
$\varepsilon_n=\frac{3}{n}=3\delta$. On the other hand, in addition,
Proposition~\ref{prop_contr} applied to the constant $2C$ instead of
$C$ gives \eqref{psevdocontr} for $\omega=0$ under condition
$ze^{2CC_{\phi }}\leq 2C$.
\end{proof}

Moreover, since we proved the existence of the semigroup $\hat{T}_t$
on ${\mathcal L}_C$ one can apply contractions $\hat{P}_\delta$
defined above by \eqref{apprsemigroup} to approximate the semigroup
$\hat{T}_t$.
\begin{corollary}\label{approx0}
Let \eqref{smallparam} holds. Then for any $G\in{\mathcal L}_C$
\[
\bigl(\hat{P}_{\frac{1}{n}}\bigr)^{[nt]} G \rightarrow \hat{T}_t G,
\quad n\rightarrow \infty
\]
for all $t\geq0$ uniformly on bounded intervals.
\end{corollary}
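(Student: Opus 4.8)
The plan is to apply the Trotter--Kato type result Lemma~\ref{EK_res-conv}. I would take $L=L_n={\mathcal L}_C$ for every $n$, the connecting maps $p_n$ equal to the identity on ${\mathcal L}_C$ (so that $\sup_n\|p_n\|=1<\infty$), the scaling $\varepsilon_n=\tfrac1n$, the discrete operators $T_n=\hat{P}_{1/n}$, and consequently $A_n=\varepsilon_n^{-1}(T_n-\1)=n(\hat{P}_{1/n}-\1)=\hat{L}_{1/n}$. Under \eqref{smallparam}, Proposition~\ref{prop_contr} tells us that $T_n=\hat{P}_{1/n}$ is a $\|\cdot\|_C$-contraction on ${\mathcal L}_C$ for every $n\geq2$, so the standing hypotheses of Lemma~\ref{EK_res-conv} are in force. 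For the limit I would use the strongly continuous contraction semigroup $T_t:=\hat{T}_t$ from Theorem~\ref{semigroup0}, with generator $A:=\hat{L}$; note that, since $\hat{L}$ is by construction the closure of $\bigl(\hat{L},{\mathcal L}_{2C}\bigr)$, the subspace $D:={\mathcal L}_{2C}$ is a core for $A$. Because $[t/\varepsilon_n]=[nt]$, assertion~(1) of Lemma~\ref{EK_res-conv} is exactly the desired convergence $(\hat{P}_{1/n})^{[nt]}G\to\hat{T}_tG$ uniformly on bounded $t$-intervals.

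Thus everything reduces to verifying assertion~(2) of Lemma~\ref{EK_res-conv} on the core $D={\mathcal L}_{2C}$. Here I would simply choose, for a fixed $G\in{\mathcal L}_{2C}$, the constant sequence $f_n:=G$. Then $f_n=G\to p_nG=G$ trivially, while $A_nf_n=\hat{L}_{1/n}G$ and Proposition~\ref{prop_ineq} gives
\[
\bigl\|(A_n-A)f_n\bigr\|_C=\bigl\|(\hat{L}_{1/n}-\hat{L})G\bigr\|_C\leq \frac{3}{n}\,\|G\|_{2C}\longrightarrow 0\qquad(n\to\infty),
\]
so $A_nf_n\to\hat{L}G=p_n AG$ in ${\mathcal L}_C$. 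Hence assertion~(2) holds, and Lemma~\ref{EK_res-conv} yields assertion~(1), which is the statement of the corollary.

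I do not expect a genuine obstacle here: the real analytic content --- contractivity of the approximants (Proposition~\ref{prop_contr}), the consistency bound \eqref{ineq} (Proposition~\ref{prop_ineq}), and generation of $\hat{T}_t$ (Theorem~\ref{semigroup0}) --- has already been established, and what remains is the bookkeeping needed to fit these into the hypotheses of Lemma~\ref{EK_res-conv}. The two mild points to be careful about are (i) identifying the abstract discrete generators $A_n$ with the concrete operators $\hat{L}_{1/n}$ of Proposition~\ref{prop_ineq}, together with the elementary relation $[t/\varepsilon_n]=[nt]$, and (ii) the fact that ${\mathcal L}_{2C}$ is a core for $\hat{L}$, on which the trivial approximating sequence $f_n=G$ is admissible. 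One should also bear in mind that the very existence of $\hat{T}_t$ relies on the stronger bound \eqref{verysmallparam} of Theorem~\ref{semigroup0}; condition \eqref{smallparam}, quoted in the corollary, is precisely what makes each $\hat{P}_{1/n}$ a contraction, so the statement should be understood in the regime where \eqref{verysmallparam} (hence also \eqref{smallparam}) is satisfied.
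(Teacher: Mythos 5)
Your proposal is correct and follows exactly the paper's route: the authors also invoke Lemma~\ref{EK_res-conv} with $L_n=L={\mathcal L}_C$, $p_n=\1$, $T_n=\hat{P}_{1/n}$, using Theorem~\ref{semigroup0} for the limiting semigroup and the bound \eqref{ineq} of Proposition~\ref{prop_ineq} to verify the generator convergence on the core ${\mathcal L}_{2C}$. Your added remark that the existence of $\hat{T}_t$ tacitly requires \eqref{verysmallparam} rather than just \eqref{smallparam} is a fair observation, but the argument itself is the same as the paper's.
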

\begin{proof}
The statement is a direct consequence of Theorem~\ref{semigroup0},
convergence \eqref{ineq}, and Lemma~\ref{EK_res-conv} (if we set
$L_n=L={\mathcal L}_C$, $p_n=1\!\!1$, $n\in{\mathbb N}$).
\end{proof}

\subsection{Finite-volume approximation of $\hat{T}_t$}

Note that $\hat{P}_\delta$ defined by \eqref{apprsemigroup} is a
formal point-wise limit of $\hat{P}_\delta^\Lambda $ as $\Lambda
\uparrow {{\mathbb R}^d}$. We have shown in \eqref{correctness} that
this definition is correct. Corollary~\ref{approx0} claims
additionally that the linear contractions $\hat{P}_\delta$
approximate the semigroup $\hat{T}_t$, when $\delta\downarrow 0$.
One may also show that mappings $\hat{P}_\delta^\Lambda $ have
a~similar property when $\Lambda \uparrow{{\mathbb R}^d}$,
$\delta\downarrow 0$.

Let us fix a system $\{\Lambda_n\}_{n\geq 2}$, where $\Lambda
_n\in{\mathcal B}_{\mathrm{b}}({{\mathbb R}^d})$, $\Lambda
_n\subset\Lambda_{n+1}$, $\bigcup_{n} \Lambda_n={{\mathbb R}^d}$. We
set
\[
T_n:=\hat{P}_{\frac{1}{n}}^{\Lambda_n}.
\]
Note that any $T_n$ is a linear mapping on ${B_{\mathrm{bs}}}(\Gamma
_0)$. We consider also the system of Banach spaces of measurable
functions on $\Gamma_0$
\[
{\mathcal L}_{C,n}:=\biggl\{ G:\Gamma_{\Lambda_n}\rightarrow{\mathbb
R} \biggm| \|G\|_{C,n}:= \int_{\Gamma_{\Lambda_n}} |G(\eta)|
C^{|\eta|} \lambda (d \eta) <\infty\biggr\}.
\]
Let $p_n:{\mathcal L}_C\rightarrow{\mathcal L}_{C,n}$ be a cut-off
mapping, namely, for any $G\in{\mathcal L}_C$
\[
(p_n G)(\eta) = 1\!\!1_{\Gamma_{\Lambda_n}} (\eta) G(\eta).
\]
Then, obviously, $\|p_n G\|_{C,n}\leq \|G\|_C$. Hence,
$p_n:{\mathcal L}_C\rightarrow{\mathcal L}_{C,n}$ is a linear
bounded transformation with $\|p_n\|=1$.

\begin{proposition}\label{approxn}
Let \eqref{smallparam} hold. Then for any $G\in{\mathcal L}_C$
\[
\bigl\| \bigl(T_n\bigr)^{[nt]} p_n G - p_n \hat{T}_t
G\bigr\|_{C,n}\rightarrow 0, \quad n\rightarrow \infty
\]
for all $t\geq0$ uniformly on bounded intervals.
\end{proposition}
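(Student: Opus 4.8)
The plan is to verify the two equivalent conditions of Lemma~\ref{EK_res-conv}, applied with $L = {\mathcal L}_C$, the spaces $L_n = {\mathcal L}_{C,n}$, the cut-off maps $p_n$ (which satisfy $\sup_n \|p_n\| = 1 < \infty$ as noted above), the contractions $T_n = \hat{P}_{1/n}^{\Lambda_n}$, the step sizes $\varepsilon_n = 1/n$, and the semigroup $\hat{T}_t$ with generator $\hat{L}$ whose core is $D = {\mathcal L}_{2C}$ (this is a core by Theorem~\ref{semigroup0}, since $\hat{L}$ is the closure of $\bigl(\hat{L}, {\mathcal L}_{2C}\bigr)$). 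It then suffices to check condition (2): for each $G \in {\mathcal L}_{2C}$ we must produce $G_n \in {\mathcal L}_{C,n}$ with $G_n \to p_n G$ and $A_n G_n \to p_n \hat{L} G$ in ${\mathcal L}_{C,n}$, where $A_n = n(T_n - 1\!\!1) = \frac{1}{1/n}(\hat{P}_{1/n}^{\Lambda_n} - 1\!\!1)$.

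First I would make the natural choice $G_n := p_n G$, so that the convergence $G_n \to p_n G$ is trivial, and the whole problem reduces to showing $\bigl\| A_n p_n G - p_n \hat{L} G\bigr\|_{C,n} \to 0$. Here I would insert the full-volume operator $\hat{L}_\delta = \frac{1}{\delta}(\hat{P}_\delta - 1\!\!1)$ with $\delta = 1/n$ as an intermediary and use the triangle inequality to split the error into two pieces:
\begin{equation*}
\bigl\| A_n p_n G - p_n \hat{L} G\bigr\|_{C,n} \leq \bigl\| A_n p_n G - p_n \hat{L}_{1/n} G \bigr\|_{C,n} + \bigl\| p_n \hat{L}_{1/n} G - p_n \hat{L} G \bigr\|_{C,n}.
\end{equation*}
The second term is controlled by $\|p_n\| = 1$ together with the key estimate \eqref{ineq} of Proposition~\ref{prop_ineq}: it is bounded by $\|\hat{L}_{1/n} G - \hat{L} G\|_C \leq \frac{3}{n}\|G\|_{2C} \to 0$. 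So everything hinges on the first term, which measures the difference between the finite-volume approximation $\hat{P}_{1/n}^{\Lambda_n}$ and the infinite-volume $\hat{P}_{1/n}$, restricted to $\Gamma_{\Lambda_n}$ and rescaled by $n$.

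The main obstacle is therefore estimating $\frac{1}{\delta}\bigl\| p_n(\hat{P}_\delta^{\Lambda_n} - \hat{P}_\delta) G\bigr\|_{C,n}$ with $\delta = 1/n$. Comparing \eqref{apprsemigroupLa} and \eqref{apprsemigroup}, the only difference is that the $\omega$-integral runs over $\Gamma_{\Lambda_n}$ in one case and over $\Gamma_0$ in the other; hence $(\hat{P}_\delta^{\Lambda_n} - \hat{P}_\delta)G$ is represented by the same formula but with the $\omega$-integration over the complement $\Gamma_0 \setminus \Gamma_{\Lambda_n}$, i.e.\ over those finite configurations $\omega$ that contain at least one point outside $\Lambda_n$. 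Crucially, in the contraction estimate of Proposition~\ref{prop_contr} the factor $(z\delta)^{|\omega|}$ forces every nonempty $\omega$ to contribute at least one factor of $z\delta$, so after redoing the Minlos-lemma computation exactly as in that proof — using $\phi \geq 0$, the bound \eqref{keypos}, and integrating out the $\eta$-variable against $C^{|\eta|}$ to produce $e^{CC_\phi|\omega|}$ — the rescaled norm $\frac{1}{\delta}\bigl\| p_n(\hat{P}_\delta^{\Lambda_n} - \hat{P}_\delta)G\bigr\|_{C,n}$ is bounded by
\begin{equation*}
z \int_{\Gamma_0 \setminus \Gamma_{\Lambda_n}} \bigl[(1-\delta)C + z\delta e^{CC_\phi}\bigr]^{|\omega| - 1}\bigl(z e^{CC_\phi}\bigr) \cdots
\end{equation*}
more precisely by a constant times $\int_{\Gamma_0} 1\!\!1_{\{\omega \cap \Lambda_n^c \neq \emptyset\}} |G(\omega)| (2C)^{|\omega|} \lambda(d\omega)$ (after using \eqref{smallparam} to absorb $(1-\delta)C + z\delta e^{CC_\phi} \leq C$ and $ze^{CC_\phi} \leq C$, and pulling out one factor of $z\delta$ from $(z\delta)^{|\omega|}$ to cancel the $\frac{1}{\delta}$). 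Since $G \in {\mathcal L}_{2C}$ means $\int_{\Gamma_0} |G(\omega)|(2C)^{|\omega|}\lambda(d\omega) < \infty$, and the indicator $1\!\!1_{\{\omega \cap \Lambda_n^c \neq \emptyset\}} \to 0$ $\lambda$-a.e.\ as $\Lambda_n \uparrow {\mathbb R}^d$ (any fixed finite $\omega$ eventually lies inside some $\Lambda_n$), dominated convergence gives that this term tends to $0$. Combining the two bounds yields condition (2) of Lemma~\ref{EK_res-conv}, which is equivalent to the stated conclusion $\bigl\|(T_n)^{[nt]} p_n G - p_n \hat{T}_t G\bigr\|_{C,n} \to 0$ uniformly on bounded time intervals.
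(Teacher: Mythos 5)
Your argument is correct, but it verifies condition (2) of Lemma~\ref{EK_res-conv} by a different decomposition than the paper. The paper splits the finite-volume operator $T_n=\hat{P}^{\La_n}_{1/n}$ into its zero-, first- and higher-order parts in $(z\delta)^{|\omega|}$, exactly as in the proof of Proposition~\ref{prop_ineq}, and compares these directly with $L_0$ and $L_1$; the only genuinely new term is the first-order birth term, where the missing region $\La_n^c$ produces the tail integral $C\int_{\Ga_0}\int_{\La_n^c}|G(\xi\cup x)|\,dx\,C^{|\xi|}\la(d\xi)$, killed by dominated convergence. You instead insert the infinite-volume $\hat{L}_{1/n}$ as an intermediary: the discretization error is then disposed of wholesale by the already proven bound \eqref{ineq}, and what remains is the single finite- versus infinite-volume comparison $n\,p_n(\hat{P}^{\La_n}_{1/n}-\hat{P}_{1/n})G$, in which $\omega$ ranges only over configurations meeting $\La_n^c$, so $|\omega|\ge 1$, one factor $z\delta$ cancels the prefactor $n$, and the contraction computation of Proposition~\ref{prop_contr} together with recombining $\xi$ and $\omega$ via Lemma~\ref{Minlos} (the number of sub-configurations $\omega\subset\eta$ meeting $\La_n^c$ contributes the factor $2^{|\eta|}$, which is exactly where the weight $(2C)^{|\eta|}$ and the hypothesis $G\in\L_{2C}$ enter) gives your bound, finished again by dominated convergence. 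Your route is arguably tidier, since it reuses Propositions~\ref{prop_contr} and \ref{prop_ineq} as black boxes instead of repeating their estimates; the paper's route keeps every error explicitly of order $1/n$ except the one tail term. Three small points you should make explicit: (i) Lemma~\ref{EK_res-conv} requires each $T_n$ to be a contraction on $\L_{C,n}$, which is not literally Proposition~\ref{prop_contr} but follows by the same computation, using $\int_{\La_n}(1-e^{-\phi(x)})dx\le C_\phi$; (ii) you tacitly identify $(T_n p_n G)(\eta)$ with $(\hat{P}^{\La_n}_{1/n}G)(\eta)$ for $\eta\in\Ga_{\La_n}$, which is legitimate because the kernel in \eqref{apprsemigroupLa} evaluates $G$ only on $\Ga_{\La_n}$, but it deserves a sentence; (iii) the intermediate display in your estimate is garbled as written, although the final bound $\mathrm{const}\cdot\int_{\Ga_0}\1_{\{\omega\cap\La_n^c\neq\emptyset\}}|G(\omega)|(2C)^{|\omega|}\la(d\omega)$ and the dominated-convergence conclusion are correct.
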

\begin{proof}
The proof of the proposition is completed by showing that all
conditions of Lemma~\ref{EK_res-conv} hold. Using completely the
same arguments as in the proof of Proposition~\ref{prop_contr} one
gets that each $T_n=\hat{P}_{\frac{1}{n}}^{\Lambda_n}$ is a linear
contraction on ${\mathcal L}_{C,n}$, $n\geq 2$ (note that for any
$n\geq 2$, \eqref{integrability} implies $\int_{\Lambda
_n}\bigl(1-e^{-\phi(x)}\bigr)dx\leq C_\phi<\infty$). Next, we set
$A_n=n (T_n - 1\!\!1_n)$ where $1\!\!1_n$ is a unit operator on
${\mathcal L}_{C,n}$ and let us expand $T_n$ in three parts
analogously to the proof of Proposition~\ref{prop_ineq}:
$T_n=T_n^{(0)}+T_n^{(1)}+T_n^{(\geq2)}$. As a result, $A_n=n
(T_n^{(0)} - 1\!\!1_n) + nT_n^{(1)}+nT_n^{(\geq2)}$. For any $G\in
{\mathcal L}_{2C}$ we set $G_n=p_n G\in{\mathcal
L}_{2C,n}\subset{\mathcal L}_{C,n}$. To finish the proof we have to
verify that for any $G\in {\mathcal L}_{2C}$
\begin{equation}\label{safcond}
\| A_n G_n - p_n \hat{L}G \|_{C,n} \rightarrow 0, \quad n\rightarrow
\infty.
\end{equation}
For any $G\in {\mathcal L}_{2C}$
\begin{align}\label{innn}
\| A_n G_n - p_n LG \|_{C,n}  \leq & \, \| n(T_n^{(0)} - 1\!\!1_n) G_n -
p_n L_0 G \|_{C,n}\\& + \| nT_n^{(1)} G_n - p_n L_1 G \|_{C,n}
+ \| nT_n^{(\geq2)} G_n \|_{C,n}.\notag
\end{align}
Note, that $p_n L_0 G =L_0 G_n$. Using the same arguments as in the
proof of Proposition~\ref{prop_ineq} we obtain
\begin{equation*}
\| n(T_n^{(0)} - 1\!\!1_n) G_n - p_n L_0 G \|_{C,n} + \|
nT_n^{(\geq2)} G_n \|_{C,n} \leq \frac{2}{n} \|G\|_{2C,n} \leq
\frac{2}{n} \|G\|_{2C}.
\end{equation*}
Next,
\begin{align*}
& \| nT_n^{(1)} G_n - p_n L_1 G \|_{C,n}\\ \leq &\, z\int_{\Gamma
_{\Lambda_n}}\sum_{\xi \subset \eta }\int_{{{\mathbb R}^d}}
\left\vert\left( 1-\frac{1}{n} \right) ^{| \xi | } 1\!\!1_{\Lambda
_n} (x) -1\right\vert |G\left( \xi \cup x\right)| \\ & \times
\prod\limits_{y\in \xi }e^{-\phi \left( y-x\right)
}\prod\limits_{y\in \eta \setminus \xi }\left( 1- e^{-\phi \left(
y-x\right) }\right) dx  C^{\left\vert \eta \right\vert }d\lambda
\left( \eta \right)\\ \leq &\, z\int_{\Gamma_{\Lambda_n}}\int_{\Gamma
_{\Lambda_n}} \int_{{{\mathbb R}^d}} \left[ 1- \left( 1-\frac{1}{n}
\right) ^{| \xi | } 1\!\!1_{\Lambda_n} (x) \right] |G\left( \xi \cup
x\right)| \\ & \times \prod\limits_{y\in \eta }\left( 1-
e^{-\phi \left( y-x\right) }\right) dx C^{\left\vert \eta \cup \xi
\right\vert }d\lambda \left( \eta \right)d\lambda \left( \xi
\right)\\ \leq & \, C\int_{\Gamma_{\Lambda_n}} \int_{{{\mathbb R}^d}}
\left[ 1- \left( 1-\frac{1}{n} \right) ^{| \xi | } 1\!\!1_{\Lambda
_n} (x) \right] |G\left( \xi \cup x\right)| dx C^{\left\vert \xi
\right\vert }d\lambda \left(
\xi \right),\\
\intertext{where we have used \eqref{integrability} and
\eqref{smallparam}. Using the same estimates as for \eqref{spec1} we
may continue} \leq &\, C\int_{\Gamma_{\Lambda_n}} \int_{\Lambda_n}
\left[ 1- \left( 1-\frac{1}{n} \right) ^{| \xi | }  \right] |G\left(
\xi \cup x\right)| dx C^{\left\vert \xi \right\vert }d\lambda \left(
\xi \right) \\& + C\int_{\Gamma_{\Lambda_n}} \int_{\Lambda
_n^c} |G\left( \xi \cup x\right)| dx
C^{\left\vert \xi \right\vert }d\lambda \left( \xi \right)\\
\leq &\, \frac{1}{n}\|G\|_{2C,n}+ C\int_{\Gamma_{0}} \int_{\Lambda
_n^c} |G\left( \xi \cup x\right)| dx C^{\left\vert \xi \right\vert
}d\lambda \left( \xi \right).
\end{align*}
But by the Lebesgue dominated convergence theorem,
\[
\int_{\Gamma_{0}} \int_{\Lambda_n^c} |G\left( \xi \cup x\right)| dx
C^{\left\vert \xi \right\vert }d\lambda \left( \xi \right)
\rightarrow 0, \quad n\rightarrow \infty.
\]
Indeed, $1\!\!1_{\Lambda_n^c} (x) |G\left( \xi \cup x\right)|
\rightarrow 0 $ point-wisely and may be estimated on $\Gamma
_0\times{{\mathbb R}^d}$ by $|G\left( \xi \cup x\right)|$ which is
integrable:
\[
C \int_{\Gamma_{0}} \int_{{{\mathbb R}^d}} |G\left( \xi \cup
x\right)| dx C^{\left\vert \xi \right\vert }d\lambda \left( \xi
\right) =\int_{\Gamma_0} |\xi| |G(\xi)| C^{\left\vert \xi
\right\vert }d\lambda \left( \xi \right) \leq \|G\|_{2C}<\infty.
\]
Therefore, by \eqref{innn}, the convergence \eqref{safcond} holds
for any $G\in{\mathcal L}_{2C}$, which completes the proof.
\end{proof}


\begin{thebibliography}{[1]}

\bibitem{DMZ} X.\,Descombes, R.\,Minlos, and E.\,Zhizhina, Object extraction
using a stochastic birth-and-death dynamics in continuum, J. Math.
Imaging Vision \textbf{33}, no.~3, 347--359 (2009).

\bibitem{DoSiSu}
R.\,L.\,Dobrushin, Ya.\,G.\,Sinai, and Yu.\,M.\,Sukhov, Dynamical
Systems of Statistical Mechanics, in: Ergodic Theory with
Applications to Dynamical Systems and Statistical Mechanics,
Encyclopaedia Math.~Sci., Vol. II (Springer, 1989).

\bibitem{EK}
S.\,N.\,Ethier and T.\,G.\,Kurtz, Markov Processes: Characterization
and Convergence (Wiley \& Sons, 2005).

\bibitem{FiKoKut-Glauber-2}
D.\,L.\,Finkelshtein, Yu.\,G.\,Kondratiev, and O.\,V.\,Kutoviy,
Correlation functions evolution for~the~Glauber dynamics
in~continuum {\em In preparation}.

\bibitem{FiKoLy}
D.\,L.\,Finkelshtein, Yu.\,G.\,Kondratiev, and E.\,Lytvynov,
Equilibrium Glauber dynamics of continuous particle systems as a
scaling limit of Kawasaki dynamics, Random Oper. Stoch. Equations
\textbf{15}, 105--126 (2007).

\bibitem{FiKoOl07}
D.\,L.\,Finkelshtein, Yu.\,G.\,Kondratiev, and M.\,J.\,Oliveira,
Markov evolutions and hierarchical equations in the continuum. I.
One-component systems, J. Evol. Equ. \textbf{9}, no.~2, 197--233
(2009).

\bibitem{GarKur06}
N.\,L.\,Garcia and T.\,G.\,Kurtz, Spatial birth and death processes
as solutions of stochastic equations, Alea \textbf{1}, 281--303
(2006).

\bibitem{KoKu99}
Yu.\,G.\,Kondratiev and T.\,Kuna, Harmonic analysis on configuration
space. I. General theory, Infin. Dimens. Anal. Quantum Probab.
Relat. Top. \textbf{5}, no.~2, 201--233 (2002).

\bibitem{KoKut}
Yu.\,G.\,Kondratiev and O.\,V.\,Kutoviy, On the metrical properties
of the configuration space, Math. Nachr. \textbf{279}, no.~7,
774--783 (2006).

\bibitem{KoKutMi}
Yu.\,G.\,Kondratiev, O.\,V.\,Kutoviy, and R.\,A.\,Minlos, On
non-equilibrium stochastic dynamics for interacting particle systems
in continuum, J. Funct. Anal. \textbf{255}, no.~1, 200--227 (2008).

\bibitem{KoKutZh}
Yu.\,G.\,Kondratiev, O.\,V.\,Kutoviy, and E.\,Zhizhina,
Nonequilibrium Glauber-type dynamics in continuum, J. Math. Phys.
\textbf{47}, no.~11, 17 pp. (2006).

\bibitem{KoLy}
Yu.\,G.\,Kondratiev and E.\,Lytvynov, Glauber dynamics of continuous
particle systems, Ann. Inst. H. Poincar\'e Probab. Statist.
\textbf{41}, no.~4, 685--702 (2005).

\bibitem{KoLyRo}
Yu.\,G.\,Kondratiev, E.\,Lytvynov, and M.\,R\"{o}ckner, Equilibrium
Kawasaki dynamics of continuous particle systems, Infin. Dimens.
Anal. Quantum Probab. Relat. Top. \textbf{10}, 185--210 (2007).

\bibitem{KoMiZh}
Yu.\,G.\,Kondratiev, R.\,A.\,Minlos, and E.\,Zhizhina, One-particle
subspaces of the generator of Glauber dynamics of continuous
particle systems, Rev. Math. Phys. \textbf{16}, no.~9, 1--42 (2004).

\bibitem{Le75I}
A.\,Lenard, States of classical statistical mechanical systems of
infinitely many particles. I, Arch. Rational Mech. Anal.,
\textbf{59}, 219--239 (1975).

\bibitem{Le75II}
A.\,Lenard, States of classical statistical mechanical systems of
infinitely many particles. II, Arch. Rational Mech. Anal.,
\textbf{59}, 241--256 (1975).

\bibitem{R69}
D.\,Ruelle, Statistical Mechanics: Rigorous Results (New York,
Benjamin, 1969).

\bibitem{R70}
D.\,Ruelle, Superstable interactions in classical statistical
mechanics, Commun. Math. Phys. \textbf{18}, 127--159 (1970).

\end{thebibliography}
\end{document}